\documentclass[reqno,tbtags,intlimits,a4paper,oneside,12pt]{amsart}
\usepackage[cp1251]{inputenc}%
\usepackage[english]{babel}
\usepackage{amssymb,upref,calc,url}
\usepackage[mathscr]{eucal}
\usepackage{graphicx}
\usepackage{amsmath,amscd,amsthm,verbatim}
\usepackage[all]{xy}
\hfuzz 16pt

\usepackage[in]{fullpage}
\newtheorem{thm}{\hskip\parindent Theorem}[section]
\newtheorem{lem}[thm]{\hskip\parindent Lemma}
\newtheorem{rem}[thm]{\hskip\parindent Remark}
\newtheorem{prb}[thm]{\hskip\parindent Problem}
\newtheorem{cor}[thm]{\hskip\parindent Corollary}

\theoremstyle{definition}
\newtheorem{dfn}{\hskip\parindent Definition}[section]

\pagestyle{plain}

\DeclareMathOperator{\wt}{wt}

\newcommand{\CU}{\mathcal{C}(z,\lambda)}

\begin{document}

\title{Sigma functions and Lie algebras
of~Schr\"odinger~operators}
\author{V.\,M.~Buchstaber, E.\,Yu.~Bunkova}
\address{Steklov Mathematical Institute of Russian Academy of Sciences, Moscow, Russia}
\email{buchstab@mi-ras.ru, bunkova@mi-ras.ru}
\keywords{Schr\"odinger operators, polynomial Lie algebras, differentiation of abelian functions over parameters}
\thanks{This work is supported by the Russian Science Foundation under grant 20-11-19998.
}

\begin{abstract}
In the work by V.\,M.\,Buchstaber and  D.\,V.\,Leikin \cite{Nonhol} for any $g > 0$ is defined a system of $2g$ multidimensional Schr\"odinger equations in magnetic fields with quadratic potentials. This systems are equivalent to systems of heat equations in nonholonomic frame. It is proved that such a system determines the sigma function of the universal hyperelliptic curve of genus $g$. A polynomial Lie algebra with $2g$ Schr\"odinger operators $Q_0, Q_2, \ldots, Q_{4g-2}$ as generators is introduced.

In this work for any $g > 0$ we obtain explicit expressions for $Q_0$, $Q_2$, $Q_4$, and recurrent formulas for $Q_{2k}$ with $k>2$ expressing this operators as elements of the polynomial Lie algebra using Lie brackets of the operators $Q_0$, $Q_2$, and $Q_4$.

As an application we obtain explicit expressions for the operators $Q_0, Q_2, \ldots, Q_{4g-2}$ for $g = 1,2,3,4$.

\text{}
\end{abstract}

\maketitle

\rightline{\it Dedicated to the memory of remarkable mathematician}
\rightline{\it Viktor Zelikovich Enolskii (1945 - 2019),}
\rightline{\it who made a great contribution to the development and applications}
\rightline{\it of the theory of multidimensional sigma functions.}

\text{}

Viktor Enolskii's contribution to mathematics, mathematical and theoretical physics, including his results on the theory and applications of multidimensional sigma functions, is described in a memorial article to appear in the Notices of AMS.

\text{}

\text{}

\section{Introduction} 

The heat equation, equivalent to Schr\"odinger equation, plays a fundamental role in the theory of Abelian functions. For example, part I of the monograph \cite{Mum} begins with characterizing the elliptic theta function as a periodic fundamental solution of the one-dimensional heat equation.

 Directly from the construction of the function $\theta(z, \Omega)$, where $z \in \mathbb{C}^g$, and $\Omega$ is a symmetric $(g \times g)$ matrix, follows
 that this function satisfies the classical system of multidimensional heat equations.
In this system the times are the $g(g+1)/2$ independant parameters of the manifold of symmetric $(g \times g)$ matrices $\Omega$. For this system, the vector fields along the times form a holonomic frame.

For $g>3$ the dimension of the manifold of symmetric $(g \times g)$ matrices that are period matrices of a non-singular Riemann surface of genus $g$ is less than $g(g+1)/2$. Here arises the well-known Riemann-Schottky problem, which has been solved by T.\,Shiota (see \cite{Sh}) based on a hypothesis by S.\,P.\,Novikov, in terms of the~Kadomtsev--Petviashvili equation.

Within the framework of F. Klein's problem, for each non-singular Riemann surface~$V$ of genus $g$, an entire function on $\mathbb{C}^g$ is constructed, namely, the multidimensional sigma function. It's logarithmic derivatives of order $2$ and higher
generate
the whole field of meromorphic functions on the Jacobian of the curve $V$ (see \cite{BEL-97}).
In it's first constructions, this multidimensional sigma function was introduced as a modification of the multidimensional theta function.
The efficiency of its implementation is achieved through the specially selected choice of the curve model.
This results in an entire function whose series expansion in the vector argument $z$ in $\mathbb{C}^g$ gives a series over the ring of polynomials in the parameters of the equation of the curve.
By choosing a family of curves (hyperelliptic or a wider family of $(n, s)$-curves), we can represent theta functions of these curves as modified sigma functions of these curves.
An effective description of such modifications gives a solution to the Riemann--Schottky problem that is special for the family of curves under consideration.
But, in this case, we cannot use the classical system of heat equations to characterize the obtained theta functions, since the vector fields along the times corresponding to the variations of the periods can intersect the discriminants of the curves. This is again connected to the Riemann--Schottky problem.  

In \cite{Nonhol} was constructed a system of heat equations that characterizes the multidimensional sigma functions. In this system the vector fields along the times corresponding to the variations of the parameters of the equations of the curves, form a non-holonomic frame. The Schr\"odinger equations  in magnetic fields arise.

Naturally arises the question of an effective description of this system of equations. This work is devoted to the solution of this question in the case of a family of hyperelliptic curves.

\vfill

\eject

\section{The problem in the case of a family of hyperelliptic curves} 

We consider hyperelliptic curves of genus $g \in \mathbb{N}$ in the model
\begin{equation} \label{mod}
\mathcal{V}_\lambda = \{(x, y)\in\mathbb{C}^2 \colon
y^2 = x^{2g+1} + \lambda_4 x^{2 g - 1}  + \lambda_6 x^{2 g - 2} + \ldots + \lambda_{4 g} x + \lambda_{4 g + 2}\}. 
\end{equation}
The curve depends on the parameters $\lambda = (\lambda_4, \lambda_6, \ldots, \lambda_{4 g}, \lambda_{4 g + 2}) \in \mathbb{C}^{2 g}$.
Let $\mathcal{B} \subset \mathbb{C}^{2g}$ be the subspace of parameters such that the curve $\mathcal{V}_{\lambda}$ is nonsingular for~$\lambda \in~\mathcal{B}$.
Then we have $\mathcal{B} = \mathbb{C}^{2g} \backslash \Sigma$, where $\Sigma$ is~the discriminant hypersurface of the~universal curve.

For a meromorphic function $f$ in $\mathbb{C}^g$ the vector $\omega \in \mathbb{C}^g$ is a period if $f(z+\omega) = f(z)$ for all~$z \in \mathbb{C}^g$.
If a meromorphic function $f$
has $2g$ independent periods in~$\mathbb{C}^g$, then $f$ is called an \emph{abelian function}.
Therefore an abelian function is a  meromorphic~function on the complex torus $T^g = \mathbb{C}^g/\Gamma$,
where $\Gamma$ is the lattice formed by the periods.

For each $\lambda \in \mathcal{B}$ the set of periods of holomorphic differentials on the curve $\mathcal{V}_\lambda$ generates a lattice $\Gamma_\lambda$ of rank $2 g$ in $\mathbb{C}^g$. 
A \emph{hyperelliptic function of genus} $g$ (see \cite{B2, BEL-12, BEL18}) is a~meromorphic function in $\mathbb{C}^g \times \mathcal{B}$,
such that for each $\lambda \in \mathcal{B}$ it's restriction on $\mathbb{C}^g \times \lambda$
is an abelian function, where the torus~$T^g$ is~the~Jacobian $\mathcal{J}_\lambda = \mathbb{C}^g/\Gamma_\lambda$ of the curve~$\mathcal{V}_\lambda$.
We denote by $\mathcal{F}$ the field of hyperelliptic functions of genus $g$. For the properties of this field, see \cite{BEL-12, BEL18}. 
 
We denote the coordinates in~$\mathbb{C}^g$ by~$z = (z_1, z_3, \ldots, z_{2g-1})$. The indices of the coordinates $z = (z_1, z_3, \ldots, z_{2g-1}) \in \mathbb{C}^g$ and of the parameters $\lambda = (\lambda_4, \lambda_6, \ldots, \lambda_{4 g}, \lambda_{4 g + 2}) \in \mathbb{C}^{2 g}$ determine their weights:
$\wt z_k = - k$, $\wt \lambda_k = k$.
We denote by $P$ the ring of polynomials in $\lambda \in \mathcal{B} \subset \mathbb{C}^{2g}$.

We consider polynomial Lie algebras \cite{BPol} of vector fields tangent to the discriminant $\Sigma$ in~$\mathbb{C}^{2g}$. Their generators $\{L_{0}, L_{2}, L_{4}, \ldots, L_{4 g - 2}\}$ are the vector fields 
\[
 L_{2k} = \sum_{s=2}^{2g+1} v_{2k+2, 2s-2}(\lambda) {\partial \over \partial \lambda_{2k}}, \quad \text{where} \quad v_{2k+2, 2s-2}(\lambda) \in P.
\]
At a point $\lambda \in \mathcal{B}$ this vector fields determine a $2g$-dimensional nonholonomic frame.
The structure of a Lie algebra in the $P$-module with the
generators
$
1, \; L_{0}, \; L_{2}, \; L_{4}, \; \ldots, \; L_{4 g - 2}
$ is determined by the polynomial matrices $V(\lambda)=(v_{2i,2j}(\lambda)),$ where
$i,j=1,\dots,2g$, and~$C(\lambda) = \{c_{2i,2j}^{2k}(\lambda)\}$, where $i,j,k=0,\dots,2g-1$, such that 
\begin{equation}
[L_{2i},L_{2j}]=\sum_{k=0}^{2g-1} c_{2i,2j}^{2k}(\lambda)L_{2k},
\quad[L_{2i},\lambda_{2q}]=v_{2i+2,2q-2}(\lambda),
\quad[\lambda_{2q},\lambda_{2r}]=0\label{al}.
\end{equation}
Here $\lambda_q$ is the operator of multiplication 
by
the function $\lambda_q$ in $P$.

We can find the explicit expressions for the matrix $V(\lambda)$ in \cite[\S 4.1]{A} (see also \cite{ BPol} and  \cite[Lemma~3.1]{4A}). For~convenience, we~assume that $\lambda_s = 0$ for all $s \notin \{0,4,6, \ldots, 4 g, 4 g + 2\}$ and $\lambda_0 = 1$.
For $k, m \in \{ 1, 2, \ldots, 2 g\}$, $k\leqslant m$, we set
\[
 v_{2k, 2m}(\lambda) = \sum_{s=0}^{k-1} 2 (k + m - 2 s) \lambda_{2s} \lambda_{2 (k+m-s)}
 - {2 k (2 g - m + 1) \over 2 g + 1} \lambda_{2k} \lambda_{2m},
\] 
and for $k > m $ we set $v_{2k, 2m}(\lambda) = v_{2m, 2k}(\lambda)$.

The vector field $L_0$ is the Euler vector field, namely, since $\wt \lambda_{2k} = 2 k$, we have
\begin{align} \label{L0}
 [L_0, \lambda_{2k}] &= 2 k \lambda_{2k}, & [L_0, L_{2k}] &= 2 k L_{2k}.
\end{align}
This determines the weights of the vector fields $L_k$, namely $\wt L_{2k} = 2 k$. 
The structure of a Lie algebra described above gives  in this case a graded polynomial Lie algebra \cite{BPol} that we denote by $\mathscr{L}_{L}$.
The structure polynomials $c_{2i,2j}^{2s}(\lambda)$ are described in~\cite[Theorem~2.5]{Nonhol}.

There is the well-known Lie--Witt algebra $W_{\geqslant}$ over the field of complex numbers $\mathbb{C}$ generated by the operators $l_{2i}$, where $i = 0, 1, 2, \ldots$, with the commutation relations
\[
 [l_{2i}, l_{2j}] = 2 (j-i) l_{2 (i+j)}.
\]
With respect to the bracket $[\cdot,\cdot]$ the  Lie--Witt algebra $W_{\geqslant}$ is generated by the three operators $l_0$, $l_2$, $l_4$.
The graded polynomial Lie algebra $\mathscr{L}_{L}$
over $P$ is a deformation of the Lie--Witt algebra $W_{\geqslant}$. It is as well generated by only three operators  $L_0$, $L_2$, and $L_4$. The relation holds (see Lemma \ref{L33}):
\[
[L_2, L_{2k}] = 2 (k-1) L_{2k+2} + {4 (2 g - k) \over (2 g + 1)} \left( \lambda_{2k+2} L_0 - \lambda_4 L_{2k-2}\right). 
\]

Now we introduce the Schr\"odinger operators. We consider the space $\mathbb{C}^{3g}$ with coordinates $(z, \lambda)$.
We denote by $\CU$ the ring of differentiable functions in $z$ and
$\lambda$. We set
\begin{equation} \label{HQ}
Q_{2k} = L_{2k} - H_{2k}, \qquad k = 0, 1, 2, \ldots, 2g-1,
\end{equation}
where
\begin{equation} \label{e2}
H_{2k} = {1 \over 2} \sum \left( \alpha_{a,b}^{(k)}(\lambda)\partial_a \partial_b + 2 \beta_{a,b}^{(k)}(\lambda)z_a\partial_b + \gamma_{a,b}^{(k)}(\lambda)z_a z_b\right) + \delta^{(k)}(\lambda),
\end{equation}
and the summation is over odd $a, b$ from $1$ to $2g-1$. In \cite{Nonhol} a solution to the following problem is given: 

\begin{prb}\label{pro1} Find the sufficient conditions on the 
data
$\bigl\{\alpha^{(i)}(\lambda),\beta^{(i)}(\lambda),\gamma^{(i)}(\lambda),\delta^{(i)}(\lambda)\bigr\}$
for the operators \eqref{HQ} to give a representation of Lie
algebra \eqref{al} in the ring of operators on $\CU.$
\end{prb}

\begin{dfn}
The system of equations for $\varphi = \varphi(z, \lambda)$
\begin{equation} \label{e3}
Q_{2k} \varphi = 0
\end{equation}
is called the \emph{system of heat equations}. The operators $Q_{2k}$ are called \emph{Schr\"odinger operators}.
\end{dfn}

We use the theory of hyperelliptic Kleinian functions (see \cite{BEL-12, Baker, BEL, BEL-97}, and~\cite{WW} for elliptic functions).
Take the coordinates
$(z, \lambda)$
in $\mathbb{C}^g \times \mathcal{B} \subset \mathbb{C}^{3g}$.
Let $\sigma(z, \lambda)$ be the hyperelliptic sigma function (or elliptic sigma function in genus $g=1$ case). We denote $\partial_k = {\partial \over \partial z_k}$.
Following \cite{B2, BEL18, B3}, we use the notation
\begin{equation} \label{n}
\zeta_{k} = \partial_k \ln \sigma(z, \lambda), \qquad
\wp_{k_1, \ldots, k_n} = - \partial_{k_1} \cdots \partial_{k_n} \ln \sigma(z, \lambda),
\end{equation}
where $n \geqslant 2$, $k_s \in \{ 1, 3, \ldots, 2 g - 1\}$.
The functions $\wp_{k_1, \ldots, k_n}$ provide us with examples of hyperelliptic functions. 
The field $\mathcal{F}$ is the field of fractions of the ring of polynomials $\mathcal{P}$ generated by the functions $\wp_{k_1, \ldots, k_n}$, where $n \geqslant 2$, $k_s \in \{ 1, 3, \ldots, 2 g - 1\}$.

As shown in \cite{Nonhol}, a system of heat equations \eqref{e3} for operators $Q_{2k}$ that give a solution to~Problem 1.1 determines the hyperelliptic sigma function $\sigma(z, \lambda)$, which allows to construct the hyperelliptic Kleinian functions theory starting from such operators.

The construction of the operators $Q_{2i}$ in \cite{Nonhol} uses the condition \cite[equation (1.3)]{Nonhol} stating that the commutator of operators $[Q_{2i},Q_{2j}]$ is determined by a formula over $P$ with the same coefficients as the formula for $[L_{2i}, L_{2j}]$, 
namely, the polynomial algebra generated by the operators $Q_{2i}$ with $i=0, 1, ...$ is yet another realization of Lie--Witt algebra $W_{\geqslant}$ deformation. 
Therefore, for an effective description of polynomial Lie algebra $\mathscr{L}_{Q}$ one needs to obtain explicit formulas for $Q_0, Q_2$, and $Q_4$.
These formulas are the main result of this work.
As an application, we give the explicit form for differential operators in the case of universal hyperelliptic curve of genus $4$.

\vfill

\eject

\section{Generating functions for Schr\"odinger operators} \label{S2}

In this section, in the case of the model~\eqref{mod} we present the explicit solution to Problem~\ref{pro1} from \cite[Theorem 2.6]{Nonhol}.
We will shift the original notation to be consistent with the formulas introduced in this work.

We denote by $q(a(x), b(x))$ the quotient in the Euclidean division of the polynomial $a(x)$ by the polynomial $b(x)$ and by $r(a(x), b(x))$ the residue in the Euclidean division of the polynomial $a(x)$ by the polynomial $b(x)$. We set (see \eqref{mod})
\[
 f(x) = x^{2g+1} + \sum_{k=0}^{2g-1} \lambda_{2 (2g+1-k)} x^k
\]
and define the generating functions for the operators $L_{2k}$ and $H_{2k}$:
\begin{align} \label{gf}
L(x) &= x^{2g-1} \sum_{k=0}^{2g-1} x^{-k} L_{2k}, &
H(x) &= x^{2g-1} \sum_{k=0}^{2g-1} x^{-k} H_{2k}.
\end{align}
For $R_i(x) = x^{g-i+1} \partial_x q(f(x), x^{2g-2i+2})$ we set
\begin{align*}
h(x) &= \sum_{i=1}^g x^{g-i} \partial_{2i-1} + R_{i}(x) z_{2i-1},\\
t(x) &= {1 \over 2} \sum_{i = 1}^{g} (g-i+1) z_{2i-1}^2 q(R_{i}(x), x^{g-i+2}) + \\
& + \sum_{i = 1}^{g-1} \sum_{j=i+1}^{g} (g-j+1) z_{2j-1} q(x^{g-i} \partial_{2i-1} + R_{i}(x) z_{2i-1}, x^{g-j+2}).
\end{align*}
Then we have the relation (see \cite[Theorem 2.6]{Nonhol}):
\begin{equation} \label{Qx}
H(x) = r\left(- {1 \over 4} f''(x) + 2 f(x) t(x) + {1 \over 2} h(x) \circ h(x), f'(x)\right), 
\end{equation}
where $\circ$ denotes the composition of operators.

The function $Q(x) = L(x) - H(x)$ is the generating function for Shr\"odinger operators.

\begin{lem} \label{L21}
For Schr\"odinger operators, in~\eqref{e2} we have
\begin{align*}
&\alpha_{a,b}^{(k)}(\lambda) = 1, \quad \text{if} \quad a+b = 2 k, \quad \text{and} \quad  a, b \in 2 \mathbb{N} + 1,\\
&\alpha_{a,b}^{(k)}(\lambda) = 0, \quad \text{if} \quad a+b \ne 2 k, \quad \text{and} \quad  a, b \in 2 \mathbb{N} + 1,\\
&\delta^{(k)}(\lambda) = \left(- {1 \over 4} (2g-k+1) (2g-k) + {1 \over 2} \left( g + \left[ {k+1 \over 2} \right] - k \right) \left( g - \left[ {k+1 \over 2} \right]\right) \right) \lambda_{2k}.
\end{align*}
\end{lem}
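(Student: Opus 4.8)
The plan is to extract the three stated quantities directly from the generating-function identity \eqref{Qx}, reading off the coefficients of the various monomials in $z$ and $\partial$ on both sides. First I would expand $H(x)$ according to \eqref{gf}, so that $H_{2k}$ is the coefficient of $x^{2g-1-k}$ in $H(x)$; correspondingly, in \eqref{Qx} each of $\alpha_{a,b}^{(k)}$, $\beta_{a,b}^{(k)}$, $\gamma_{a,b}^{(k)}$, $\delta^{(k)}$ is the coefficient of $x^{2g-1-k}$ in the appropriate piece of the right-hand side, after the Euclidean remainder $r(\,\cdot\,,f'(x))$ is taken. Since the claims concern only $\alpha^{(k)}$ (the pure second-order part) and $\delta^{(k)}$ (the pure scalar part), I only need to track the terms of \eqref{Qx} that produce $\partial_a\partial_b$ with no $z$, and the terms that produce a function of $\lambda$ alone.

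For $\alpha^{(k)}$: the only source of a pure $\partial_a\partial_b$ term in \eqref{Qx} is $\tfrac12 h(x)\circ h(x)$, since $t(x)$ multiplies everything by $z$-quadratic or $z\partial$ pieces and $-\tfrac14 f''(x)$ is scalar. From $h(x) = \sum_i x^{g-i}\partial_{2i-1} + R_i(x) z_{2i-1}$, the $\partial\partial$ part of $h(x)\circ h(x)$ is $\sum_{i,j} x^{g-i} x^{g-j}\partial_{2i-1}\partial_{2j-1}$, i.e. the coefficient of $\partial_a\partial_b$ with $a = 2i-1$, $b=2j-1$ is $x^{2g-i-j} = x^{(2g-1) - (i+j-1)}$. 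Thus in $H_{2k}$ with $2k = a+b = (2i-1)+(2j-1)$, i.e. $k = i+j-1$, the coefficient of $\partial_a\partial_b$ is $1$ before applying $r(\,\cdot\,,f'(x))$; I must then check that the remainder operation does not alter it. Because $\deg f' = 2g$ while the degree in $x$ of the $\partial\partial$-terms is at most $2g-2$, these terms are already reduced mod $f'(x)$ and survive untouched, and by symmetry of the sum in $i,j$ one gets the stated $\tfrac12\sum(\alpha_{a,b}\partial_a\partial_b)$ with $\alpha_{a,b}^{(k)}=1$ exactly when $a+b=2k$ and zero otherwise. (One should note the factor-of-two bookkeeping between the $i\ne j$ off-diagonal terms and the symmetrized sum in \eqref{e2}.)

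For $\delta^{(k)}$: the scalar part of $H(x)$ comes from two places, $-\tfrac14 f''(x)$ and the scalar part of $\tfrac12 h(x)\circ h(x)$, which is $\tfrac12\sum_i \partial_{2i-1}(R_i(x) z_{2i-1})\big|_{\text{scalar}} = \tfrac12\sum_i [\,\text{coefficient of }z_{2i-1}\text{ in }R_i(x)\,]$ — here $R_i(x) = x^{g-i+1}\partial_x q(f(x),x^{2g-2i+2})$, whose $x^{g-i}$-coefficient (the one matching a $\partial_{2i-1}$ acting to produce $x^{2g-1-k}$, hence the $\lambda_{2k}$-weight piece) must be computed from the Euclidean quotient $q(f(x),x^{2g-2i+2})$. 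I would then take $r(\,\cdot\,,f'(x))$ of the sum $-\tfrac14 f''(x) + (\text{this scalar})$ and read off the coefficient of $x^{2g-1-k}$, tracking only the contribution proportional to $\lambda_{2k}$ (the reduction mod $f'(x)$ mixes weights, but by homogeneity — $\wt x = -2$, $\wt \lambda_{2k}=2k$, and $\wt H_{2k} = 2k$ — only the $\lambda_{2k}\cdot(\text{constant})$ term can appear in $\delta^{(k)}$). Collecting $-\tfrac14 f''(x)$'s contribution $-\tfrac14(2g-k+1)(2g-k)\lambda_{2k}$ and the $h\circ h$ contribution, which after the combinatorics of the integer parts $[(k+1)/2]$ gives $\tfrac12\big(g+[(k+1)/2]-k\big)\big(g-[(k+1)/2]\big)\lambda_{2k}$, yields the claimed formula.

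The main obstacle is the bookkeeping in the second part: correctly identifying, inside $R_i(x)$ and hence inside $\tfrac12 h(x)\circ h(x)$, the precise coefficient that contributes to $\delta^{(k)}$, and verifying that the Euclidean remainder $r(\,\cdot\,,f'(x))$ contributes nothing new to the $\lambda_{2k}$-part of $\delta^{(k)}$ beyond what $-\tfrac14 f''(x)$ already gives in reduced form. The appearance of the floor function $[(k+1)/2]$ signals a parity split (whether $k$ is even or odd changes which pairs $(i,j)$ with $i+j-1=k$ include a diagonal term $i=j$), so I expect to handle $k$ even and $k$ odd as two cases and check they combine into the single stated expression. The $\alpha^{(k)}$ part, by contrast, is essentially immediate once the degree bound is observed.
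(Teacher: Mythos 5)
Your plan is correct and follows essentially the same route as the paper's proof: the paper likewise reads $\alpha^{(k)}_{a,b}$ off the $\partial_{2i-1}\partial_{2j-1}$ part of $\tfrac12 h(x)\circ h(x)$ (unaffected by the reduction mod $f'(x)$ for degree reasons) and obtains $\delta^{(k)}$ as the sum of the $-\tfrac14 f''(x)$ contribution and the scalar part of $\tfrac12 h(x)\circ h(x)$, with the even/odd parity of $k$ producing the floor $[(k+1)/2]$. The only caveat is a slip of notation in your description of the scalar part, which is $\tfrac12\sum_i x^{g-i}R_i(x)$ (the $x^{g-i}$ factor from the differentiating term must be kept when extracting the coefficient of $x^{2g-1-k}$), but your subsequent bookkeeping makes clear this is what you intend to compute.
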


\begin{proof}
The coefficient at $\partial_a \partial_b$ in $H(x)$ comes from the summand ${1 \over 2} h(x) \circ h(x)$, by expanding we get $\sum_{i=1}^g \sum_{j=1}^g x^{2g-i-j} \partial_{2i-1} \partial_{2j-1}$. Thus from \eqref{gf} we get the expressions for~$\alpha_{a,b}^{(k)}(\lambda)$.

The expression for $\delta^{(k)}(\lambda)$ is obtained from the summands $- {1 \over 4} f''(x)$ and ${1 \over 2} h(x) \circ h(x)$. The first gives $- {1 \over 4} (2g-k+1) (2g-k) \lambda_{2k}$ and the second gives ${1 \over 2} (g+\left[{k+1 \over 2}\right] - k) (g - \left[{k+1 \over 2}\right]) \lambda_{2k}$, 
which results in the answer.
\end{proof}

\begin{rem}
We will later show that for Schr\"odinger operators, in~\eqref{e2} we have
\begin{align*}
&\beta_{a,b}^{(k)}(\lambda) \text{ is a linear function in } \lambda,\\
&\gamma_{a,b}^{(k)}(\lambda) \text{ is a quadratic function in } \lambda.
\end{align*}
See Lemma \ref{L22}.
\end{rem}

\section{Explicit form for Schr\"odinger operators} \label{S3}

Recall that in our notation~$\lambda_s = 0$ for all $s \notin \{0,4,6, \ldots, 4 g, 4 g + 2\}$ and $\lambda_0 = 1$.

\begin{thm} \label{T31}
We have the explicit expressions:
\begin{align*}
H_0 &= \sum_{s=1}^g (2s-1) z_{2s-1} \partial_{2s-1} - {g (g+1) \over 2};\\
H_2 &= {1 \over 2} \partial_1^2 + \sum_{s=1}^{g-1} (2s-1) z_{2s-1} \partial_{2s+1}
- {4 \over 2 g + 1} \lambda_4  \sum_{s=1}^{g-1} (g - s)  z_{2s+1} \partial_{2s-1}
+ \\ & \qquad \qquad +\sum_{s=1}^{g} \left({2s - 1 \over 2} \lambda_{4s} - {2 (g - s + 1)  \over 2 g+1} \lambda_{4} \lambda_{4s - 4} \right) z_{2s-1}^2;\\
H_4 &= \partial_1 \partial_3 + \sum_{s=1}^{g-2} (2s-1) z_{2s-1} \partial_{2s+3}
+ \lambda_4 \sum_{s=1}^{g-1} (2s-1) z_{2s+1} \partial_{2s+1} - \\
& \qquad \qquad - {6 \over 2 g + 1} \lambda_6 \sum_{s=1}^{g-1} (g - s) z_{2s+1} \partial_{2s-1} +\\
& \qquad \qquad + \sum_{s=1}^{g}\left( (2 s - 1) \lambda_{4 s +2} - {3 (g - s + 1) \over 2 g + 1} \lambda_6 \lambda_{4 s - 4} \right) z_{2 s-1}^2 +\\
& \qquad \qquad + \sum_{s=1}^{g-1} (2 s - 1) \lambda_{4s+4} z_{2s-1} z_{2s+1} - {g (g-1) \over 2} \lambda_4.
\end{align*}
\end{thm}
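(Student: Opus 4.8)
The plan is to extract $H_0$, $H_2$, $H_4$ directly from the generating-function identity \eqref{Qx}, reading off the coefficients of $x^{2g-1}$, $x^{2g-2}$, $x^{2g-3}$ in $H(x)$ respectively, per the definition \eqref{gf}. First I would record the low-order expansions of the three ingredients in \eqref{Qx}. Since $f(x) = x^{2g+1} + \sum_k \lambda_{2(2g+1-k)} x^k$, we have $q(f(x), x^{2g-2i+2}) = x^{2i-1} + (\text{lower order in } x)$, so $R_i(x) = x^{g-i+1}\partial_x q(f(x),x^{2g-2i+2}) = (2i-1) x^{g+i-1} + (\text{lower order})$; in particular the two top coefficients of each $R_i$ are controlled by the leading $x^{2g+1}$ and the $\lambda_{4i}$-term of $f$, which is exactly what produces the $(2s-1)z_{2s-1}$ and $\lambda_{4s}$-type terms in the answer. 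Plugging this into $h(x) = \sum_i x^{g-i}\partial_{2i-1} + R_i(x) z_{2i-1}$ gives the leading behaviour $h(x) = \sum_i (2i-1) x^{g+i-1} z_{2i-1} + x^{g-i}\partial_{2i-1} + \cdots$, and $\tfrac12 h(x)\circ h(x)$ then produces the quadratic-in-$z$, the $z\partial$, and the $\partial\partial$ blocks.

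Next I would handle each operator in turn. For $H_0$, I take the coefficient of $x^{2g-1}$. From $\tfrac12 h\circ h$ this picks out the ``diagonal'' cross terms $x^{g+i-1}\cdot x^{g-i} = x^{2g-1}$, giving $\tfrac12\sum_i\big( R_i\text{-lead}\cdot z_{2i-1}\partial_{2i-1} + \partial_{2i-1}\circ R_i\text{-lead}\, z_{2i-1}\big)$; symmetrizing the operator composition $\partial_{2i-1}\circ z_{2i-1} = z_{2i-1}\partial_{2i-1} + 1$ yields $\sum_i (2i-1) z_{2i-1}\partial_{2i-1}$ plus a constant $\tfrac12\sum_i(2i-1) = \tfrac{g^2}{2}$-type contribution, which combines with the $-\tfrac14 f''(x)$ term (contributing $-\tfrac14 (2g+1)(2g)$ at $x^{2g-1}$, i.e. the coefficient of $x^{2g-1}$ in $f''$) and the appropriate piece of $t(x)$ to give the stated constant $-\tfrac{g(g+1)}2$; here I would invoke Lemma~\ref{L21} to cross-check the constant term against $\delta^{(0)}(\lambda)$. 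For $H_2$ and $H_4$ the computation is the same in spirit but one order lower: the coefficient of $x^{2g-2}$ (resp. $x^{2g-3}$) in $\tfrac12 h\circ h$ gives $\tfrac12\partial_1^2$ (resp. $\partial_1\partial_3$) from the top-order $\partial\partial$ term, the $(2s-1) z_{2s-1}\partial_{2s+1}$ (resp. $\partial_{2s+3}$) shift terms from pairing a leading $R$-coefficient with a subleading $x$-power, and the $\lambda_4$- resp. $\lambda_6$-, $\lambda_{4s+2}$-, $\lambda_{4s+4}$-weighted blocks from pairing with the subleading coefficients of $R_i$ (which involve $\lambda_4$, $\lambda_6$, etc.). The terms with the $\tfrac{4}{2g+1}\lambda_4$ and $\tfrac{6}{2g+1}\lambda_6$ prefactors come from the residue operation $r(\,\cdot\,, f'(x))$: reducing modulo $f'(x) = (2g+1) x^{2g} + \cdots$ feeds the $x^{2g+1}$-overflow terms of $f(x) t(x)$ and of $h\circ h$ back down with coefficient $-\tfrac{1}{2g+1}$ times the relevant $\lambda$, matching the $-\tfrac{2(g-s+1)}{2g+1}\lambda_4\lambda_{4s-4}$ and $-\tfrac{3(g-s+1)}{2g+1}\lambda_6\lambda_{4s-4}$ corrections and the off-diagonal $z_{2s+1}\partial_{2s-1}$ corrections. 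The weight bookkeeping ($\wt z_k = -k$, $\wt\lambda_k = k$, $\wt H_{2k} = 2k$) provides a strong consistency check at every stage and severely constrains which monomials can appear.

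The main obstacle is the residue step $r(\,\cdot\,, f'(x))$ in \eqref{Qx}: the raw expression $-\tfrac14 f'' + 2ft + \tfrac12 h\circ h$ has degree in $x$ well above $2g$, and only after polynomial division by $f'(x)$ does one land in the span of $x^0,\dots,x^{2g-1}$ that matches \eqref{gf}. So one must carefully track how the high-degree terms — in particular the $x^{2g+1}$ and $x^{2g}$ coefficients coming from $2 f(x) t(x)$ (where $f$ contributes its leading $x^{2g+1}$) — get folded back by the division, since these are precisely the terms that generate the $\tfrac{1}{2g+1}$-denominators and the $\lambda_4$-, $\lambda_6$-dependence of $H_2$ and $H_4$. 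I would organize this by first computing $q(f''(x), f'(x))$, $q(\text{leading part of } f\cdot t, f'(x))$, etc., to the two or three top orders only, discarding everything that cannot contribute to $x^{2g-1}, x^{2g-2}, x^{2g-3}$ by weight. A secondary nuisance is the noncommutativity in $h(x)\circ h(x)$: the $R_i(x) z_{2i-1}$ pieces are multiplication operators but $x^{g-j}\partial_{2j-1}$ differentiates them, so $h\circ h$ is not simply $h\cdot h$ as a polynomial; each reordering $\partial_{2j-1}\circ (R_i(x) z_{2i-1}) = R_i(x) z_{2i-1}\partial_{2j-1} + \delta_{ij} R_i(x)$ contributes a first-order and a zeroth-order correction that must be consistently collected into $\beta^{(k)}$ and $\delta^{(k)}$ — this is where the $\tfrac12(g+[\tfrac{k+1}2]-k)(g-[\tfrac{k+1}2])\lambda_{2k}$ term of Lemma~\ref{L21} originates and against which the constant terms $-\tfrac{g(g+1)}2$, $-\tfrac{g(g-1)}2\lambda_4$ can be verified.
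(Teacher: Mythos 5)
Your plan is essentially the paper's own proof: both extract $H_0$, $H_2$, $H_4$ as the coefficients of $x^{2g-1}$, $x^{2g-2}$, $x^{2g-3}$ in \eqref{Qx}, take $\alpha^{(k)}$ and $\delta^{(k)}$ from Lemma~\ref{L21}, and get the $z\partial$, $z^2$ and $z_{2i-1}z_{2j-1}$ coefficients by reduction modulo $f'(x)=(2g+1)x^{2g}+\dots$, which is precisely where the $\tfrac{1}{2g+1}$-factors and the $\lambda_4$-, $\lambda_6$-corrections arise, with the noncommutativity of $h(x)\circ h(x)$ handled as you describe. The one immaterial slip: $t(x)$ contributes nothing to the constant terms (every summand carries a factor $z_{2i-1}$), and indeed your two stated contributions $\tfrac{g^2}{2}-\tfrac14(2g+1)(2g)$ already sum to $-\tfrac{g(g+1)}{2}$.
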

\vspace{-13pt}
\begin{proof}
The explicit expressions for
$\alpha_{a,b}^{(k)}(\lambda)$ and $\delta^{(k)}(\lambda)$ are obtained in Lemma \ref{L21}. We see that for $k = 0,1,2$ they coincide with the ones given in the Theorem. Recall $\lambda_0 = 1$, $\lambda_2 = 0$.

In \eqref{Qx} the coefficient at $z_{2j-1} \partial_{2i-1}$ is equal to 
\[
r(2 (g-j+1) f(x) q(x^{g-i}, x^{g-j+2}) + x^{g-i} R_{j}(x), f'(x)).
\]
For $H_0$ we are interested in the coefficient at $x^{2g-1}$ of this polynomial, for $H_2$ in the coefficient at $x^{2g-2}$ and for $H_4$ in the coefficient at $x^{2g-3}$.

For $j < i+1$ the polynomial is equal to
\[
r(x^{g-i} R_{j}(x), f'(x)) = x^{g-i} R_{j}(x)
\]
because the degree of the polynomial $R_{j}(x)$ in $x$ is $g+j-1$ and the degree of $f'(x)$ is $2g$. Therefore we get the coefficient $(2j-1)$ in $j=i$ for $H_0$, in $j=i-1$ for $H_2$, and in $j=i-2$ for $H_4$, as well as the coefficient $(2 j - 3) \lambda_4$  in $j=i \geqslant 2$ for $H_4$, all the other coefficients being zero. 

For $j = i+1$ the polynomial is equal to
\begin{multline*}
r(x^{g-i} R_{j}(x), f'(x)) = r(x^{2g-2i} \partial_x q(f(x), x^{2g-2i}), f'(x)) = \\ = r( (2i+1) x^{2g}, f'(x)) +  \sum_{k=2g-2i+1}^{2g-1} (k-2g+2i) \lambda_{2 (2g+1-k)} x^{k-1} = \\ = - {2i+1 \over 2g+1} \sum_{k=0}^{2g-1} k \lambda_{2 (2g+1-k)} x^{k-1} +  \sum_{k=2g-2i+1}^{2g-1} (k-2g+2i) \lambda_{2 (2g+1-k)} x^{k-1}.
\end{multline*}
Therefore the coefficient for $H_0$ is zero, the coefficient for $H_2$ is $ - {4 \over 2g+1} (g - i)  \lambda_{4}$, the coefficient for $H_4$ is $- {6 \over 2g+1} (g - i) \lambda_{6}$.

For $j \geqslant i+2$ the polynomial is equal to
\begin{multline*}
 r(2 (g-j+1) x^{j-i-2} f(x) + x^{2g-i-j+1} \partial_x q(f(x), x^{2g-2j+2}), f'(x)) = \\
 =  r(f'(x) x^{j-i+1} + \sum_{k=0}^{2g-2j+2} (2g-2j+2-k) \lambda_{2(2g+1-k)} x^{k+j-i-2}, f'(x)) = \\ = \sum_{k=0}^{2g-2j+2} (2g-2j+2-k) \lambda_{2(2g+1-k)} x^{k+j-i-2}.
\end{multline*}
As $j \geqslant i+2 \geqslant 3$,
the coefficients for $H_0$, $H_2$ and $H_4$ are zero.

In \eqref{Qx} the coefficient at $z_{2i-1}^2$ is equal to
\begin{equation} \label{s}
r\left((g-i+1) f(x) q(R_{i}(x), x^{g-i+2}) + {1 \over 2} R_i(x)^2, f'(x)\right).
\end{equation}

For $i=1$ we note that $R_1(x) = x^{g}$ and the expression \eqref{s} takes the form
\[
{1 \over 2} r\left(x^{2g}, f'(x)\right) = - {1 \over 2 (2g+1)} \sum_{k=1}^{2g-1} k \lambda_{2 (2g+1-k)} x^{k-1}.
\]
In this expression the coefficient at $x^{2g-1}$ is zero, the coefficients at $x^{2g-2}$ and $x^{2g-3}$ provide  the corresponding coefficients for $H_2$ and $H_4$ at $z_1^2$.

For $i > 1$ we have
\begin{multline*}
r\left((g-i+1) f(x) q(R_{i}(x), x^{g-i+2}) + {1 \over 2} R_i(x)^2, f'(x)\right) = \\
= r\Big({1 \over 2} R_i(x) x^{-g+i-1} \left(2 (g-i+1) x^{-1} f(x) + x^{g-i+1} R_i(x)\right) - \\ - (g-i+1) f(x) x^{-1} r(x^{-g+i-1} R_{i}(x), x), f'(x)\Big) = \\
= r\Big({1 \over 2} R_i(x) x^{-g+i-1} \left(f'(x) + x^{-1} \sum_{k=0}^{2g-2i+1} (2g-2i+2-k) \lambda_{2 (2g+1-k)} x^k\right) - \\ - (g-i+1) f(x) x^{-1} r(x^{-g+i-1} R_{i}(x), x), f'(x)\Big) = \\
= {1 \over 2} R_i(x) x^{-g+i-1} \left(x^{-1} \sum_{k=0}^{2g-2i+1} (2g-2i+2-k) \lambda_{2 (2g+1-k)} x^k\right) - \\ - (g-i+1) r(x^{-g+i-1} R_{i}(x), x) x^{-1} (f(x) - x^{2g+1}) - \\
- (g-i+1) r(x^{-g+i-1} R_{i}(x), x) r\Big(x^{2g}, f'(x)\Big) =
\\
= {1 \over 2} R_i(x) x^{-g+i-1}
\left(x^{-1} \sum_{k=0}^{2g-2i+1} (2g-2i+2-k) \lambda_{2 (2g+1-k)} x^k\right) - \\ - (g-i+1) \lambda_{4 (i-1)} x^{-1} \left(\sum_{k=0}^{2g-1} \lambda_{2 (2g+1-k)} x^k\right)
+ {(g-i+1) \over (2g+1)} \lambda_{4 (i-1)} \left(\sum_{k=1}^{2g-1} k \lambda_{2 (2g+1-k)} x^{k-1}\right).
\end{multline*}
This is a polynomial of degree $2g-2$. Therefore the coefficient for $H_0$ is zero. The coefficients for $H_2$ and $H_4$ coincide with the corresponding coefficient given in the Theorem.

In \eqref{Qx} the coefficient at $z_{2i-1} z_{2j-1}$, $j > i$, is equal to
\begin{multline*}
r\left(R_i(x) x^{-g+j-2} \left(2 (g-j+1) f(x) + x^{g-j+2} R_j(x)\right), f'(x)\right) = \\
r\left(R_i(x) x^{-g+j-2} \left(x f'(x) + \sum_{k=0}^{2g-2j+1} (2g-2j+2-k) \lambda_{2 (2g+1-k)} x^k\right), f'(x)\right) = \\ =
R_i(x) x^{-g+j-2} \left(\sum_{k=0}^{2g-2j+1} (2g-2j+2-k) \lambda_{2 (2g+1-k)} x^k\right).
\end{multline*}
The degree of this polynomial is equal to $2g+i-j-2\leqslant2g-3$, and the equality holds for $j=i+1$.
Therefore the coefficients for $H_0$ and $H_2$ are zero, and the coefficient for~$H_4$ for $j=i+1$ is
$(2i-1) \lambda_{4i+4}$.
\end{proof}

\begin{lem} \label{L22}
For Schr\"odinger operators, in~\eqref{e2} we have
\begin{align*}
&\beta_{a,b}^{(k)}(\lambda) \text{ is a linear function in } \lambda,\\
&\gamma_{a,b}^{(k)}(\lambda) \text{ is a quadratic function in } \lambda.
\end{align*}
\end{lem}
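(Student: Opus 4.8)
The plan is to deduce both bounds from the residue computations already carried out in the proof of Theorem~\ref{T31}, once the relevant structural facts have been isolated. First I would record three elementary observations. (i) The polynomial $f(x)$ and all of its $x$-derivatives are affine-linear in $\lambda$. (ii) Since $R_i(x) = x^{g-i+1}\partial_x q(f(x),x^{2g-2i+2})$ is obtained from $f(x)$ by Euclidean division by the $\lambda$-free polynomial $x^{2g-2i+2}$, differentiation, and multiplication by $x^{g-i+1}$, every coefficient of $R_i(x)$ is affine-linear in $\lambda$; consequently so is every coefficient of each quotient $q(R_i(x),x^{g-j+2})$ occurring in $t(x)$, whereas $q(x^{g-i},x^{g-j+2})$ is a $\lambda$-free monomial in $x$. (iii) Since $f'(x)$ has coefficients in $P$, Euclidean division by $f'(x)$ commutes with extracting the coefficient of a given monomial in $\partial_a\partial_b$, $z_a\partial_b$, $z_a z_b$, or $1$; hence in $H(x) = r\left(-\tfrac14 f''(x) + 2f(x)t(x) + \tfrac12 h(x)\circ h(x),\, f'(x)\right)$ the coefficient of each such monomial equals the residue modulo $f'(x)$ of the corresponding coefficient of the operator inside.

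For $\beta_{a,b}^{(k)}$ I would revisit the proof of Theorem~\ref{T31}, where the coefficient of $z_{2j-1}\partial_{2i-1}$ in $H(x)$ is computed as $r\left(2(g-j+1)f(x)q(x^{g-i},x^{g-j+2}) + x^{g-i}R_j(x),\; f'(x)\right)$ in the three cases $j<i+1$, $j=i+1$, $j\geqslant i+2$. By (i)--(ii) the argument of this residue is affine-linear in $\lambda$ in each case, and the only point to check is that taking the residue does not raise the $\lambda$-degree. This is exactly the content of the three computations there: for $j<i+1$ the argument already has $x$-degree below $2g$, so the residue equals $x^{g-i}R_j(x)$; for $j=i+1$ the part of $x$-degree $\geqslant 2g$ is the $\lambda$-free monomial $(2i+1)x^{2g}$, so reduction merely replaces $(2g+1)x^{2g}$ by a $\lambda$-linear polynomial carrying a $\lambda$-free coefficient; and for $j\geqslant i+2$ the argument equals $x^{j-i+1}f'(x)$ plus a polynomial of $x$-degree below $2g$, the first term vanishing under $r(\,\cdot\,,f'(x))$. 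In every case the displayed residue is a polynomial in $\lambda$ of degree $1$, and $\beta_{a,b}^{(k)}(\lambda)$ is one of its coefficients.

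For $\gamma_{a,b}^{(k)}$ I would argue identically, using the computations in the proof of Theorem~\ref{T31} for the coefficients of $z_1^2$, of $z_{2i-1}^2$ with $i>1$, and of $z_{2i-1}z_{2j-1}$ with $j>i$. The coefficient of $z_1^2$ in $H(x)$ is $r\left(\tfrac12 x^{2g},f'(x)\right)$, which is $\lambda$-linear. In the remaining cases the coefficient inside the residue is the sum of $(g-i+1)f(x)q(R_i(x),x^{g-i+2})$, resp.\ $2(g-j+1)f(x)q(R_i(x),x^{g-j+2})$, with $\tfrac12 R_i(x)^2$, resp.\ $R_i(x)R_j(x)$; although each summand has $x$-degree $2g+i+j-2\geqslant 2g$, the proof of Theorem~\ref{T31} exhibits a cancellation that drops the $x$-degree of the sum strictly below $2g$, so no reduction takes place, and the sum is displayed there as an expression each of whose terms is a product of two factors affine-linear in $\lambda$, hence of $\lambda$-degree at most $2$. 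Thus $\gamma_{a,b}^{(k)}(\lambda)$ is a coefficient of a polynomial quadratic in $\lambda$. The single non-obvious point throughout is that the residue map cannot raise the $\lambda$-degree here; as indicated, this is precisely guaranteed by the cancellations already established in the proof of Theorem~\ref{T31}, so no new calculation is needed.
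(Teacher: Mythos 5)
Your route is the paper's own: its proof of Lemma~\ref{L22} consists precisely of the remark that the statement can be read off from the explicit expressions derived in the proof of Theorem~\ref{T31}, and your observations (i)--(iii) together with the case-by-case inspection of those expressions are a faithful elaboration of that remark. One step in your treatment of $\gamma_{a,b}^{(k)}$ is misstated, although the conclusion is unaffected. For the coefficient of $z_{2i-1}^2$ with $i>1$, the polynomial $(g-i+1)f(x)\,q(R_i(x),x^{g-i+2})+\tfrac12 R_i(x)^2$ has $x$-degree $2g+2i-2$ with leading coefficient $(2i-1)(2g+1)/2\neq 0$, so its degree does \emph{not} drop below $2g$ and a reduction modulo $f'(x)$ genuinely occurs; a naive reduction of a degree-$(2g+2i-2)$ polynomial could in principle raise the $\lambda$-degree by up to $2i-2$, so this case needs more than ``no reduction takes place.'' What the computation in the proof of Theorem~\ref{T31} actually shows is that the part of degree $\geqslant 2g$ equals the exact polynomial multiple $\tfrac12\,\partial_x q(f(x),x^{2g-2i+2})\,f'(x)$ plus the single monomial $-(g-i+1)\lambda_{4(i-1)}x^{2g}$; the latter is the only term that gets reduced, and since $r(x^{2g},f'(x))$ is linear in $\lambda$ while its prefactor $\lambda_{4(i-1)}$ is also linear, the outcome is quadratic in $\lambda$, exactly as displayed there. (In the off-diagonal case $z_{2i-1}z_{2j-1}$, $j>i$, the degree-$\geqslant 2g$ part is the exact multiple $x^{j-i}\,\partial_x q(f(x),x^{2g-2i+2})\,f'(x)$ and what remains has degree below $2g$, so there your description is accurate, as it is in the three $\beta$-cases.) With this correction your argument coincides with the paper's.
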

\textit{The proof} follows from the explicit expressions in the proof of Theorem \ref{T31}.

\begin{lem} \label{L33} The relation holds:
\begin{equation} \label{that}
[L_2, L_{2k}] = 2 (k-1) L_{2k+2} + {4 (2 g - k) \over (2 g + 1)} \left( \lambda_{2k+2} L_0 - \lambda_4 L_{2k-2}\right). 
\end{equation}
\end{lem}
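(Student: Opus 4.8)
The plan is to prove \eqref{that} by evaluating both sides on the coordinate functions of $\mathbb{C}^{2g}$. Each of $L_2,L_{2k},L_{2k+2},L_{2k-2},L_0$ and every Lie bracket of them is a vector field on $\mathbb{C}^{2g}$, while $\lambda_{2k+2}$ acts by multiplication, so that $\lambda_{2k+2}L_0$ is again a derivation; hence both sides of \eqref{that} are derivations of the polynomial ring in $\lambda$, and it suffices to check that they agree when applied to each coordinate $\lambda_{2q}$, $q=2,\dots,2g+1$. From \eqref{al} one has $L_{2i}\lambda_{2q}=v_{2i+2,2q-2}(\lambda)$, and from \eqref{L0} one has $L_0\lambda_{2q}=2q\,\lambda_{2q}$. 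Therefore \eqref{that} is equivalent to the family of polynomial identities
\begin{multline*}
L_2\bigl(v_{2k+2,2q-2}\bigr)-L_{2k}\bigl(v_{4,2q-2}\bigr)= \\
2(k-1)\,v_{2k+4,2q-2}+\frac{4(2g-k)}{2g+1}\Bigl(2q\,\lambda_{2k+2}\lambda_{2q}-\lambda_4\,v_{2k,2q-2}\Bigr),
\end{multline*}
to be verified for all admissible $q$; here, on the left, each $L_{2j}$ acts on a polynomial $\phi$ in $\lambda$ as the derivation $\sum_{p}v_{2j+2,2p-2}\,\partial\phi/\partial\lambda_{2p}$.

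Next I would make everything explicit by substituting the closed form of $v_{2k,2m}$ from \S\ref{S2}, using the conventions $\lambda_0=1$, $\lambda_2=0$ and the symmetrization $v_{2k,2m}=v_{2m,2k}$; in particular it is convenient to record that $v_{4,2p-2}=2(p+1)\lambda_{2p+2}-\tfrac{4(2g-p+2)}{2g+1}\lambda_4\lambda_{2p-2}$. Each $v$ above is then a concrete quadratic in the $\lambda$'s, applying the derivation $L_{2j}$ produces a concrete cubic, and the displayed identity becomes an equality of two explicit homogeneous polynomials in $\lambda$ of weight $2k+2q+2$. One computes $\partial v_{2k+2,2q-2}/\partial\lambda_{2p}$ and $\partial v_{4,2q-2}/\partial\lambda_{2p}$ from the same formula, sums over $p$, and collects; it is natural to organise the comparison by first splitting off the ``Witt part'' $2(k-1)v_{2k+4,2q-2}$ (here the $s=0$ contributions coming from $v_{2k+2,2q}$ and from $v_{2k+2,2q-2}$ combine so that the coefficient of the leading monomial is $2(k-1)$), and then matching the two $\tfrac1{2g+1}$-correction terms separately. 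Finally one treats the boundary values of the indices $q$ and $k$ — those for which one of the $\lambda$'s occurring is $\lambda_0$, $\lambda_2$, or a vanishing $\lambda_{2m}$ with $2m>4g+2$ — where the stated conventions make the identity continue to hold.

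The only real difficulty is the bookkeeping in the second step: the formula for $v_{2k,2m}$ has two pieces, the $s$-summation and the $\tfrac{2k(2g-m+1)}{2g+1}$-correction, so after differentiating and summing over $p$ one obtains many terms carrying the factor $\tfrac1{2g+1}$ that must be shown to collapse into exactly $\tfrac{4(2g-k)}{2g+1}\bigl(2q\,\lambda_{2k+2}\lambda_{2q}-\lambda_4\,v_{2k,2q-2}\bigr)$; the weight grading (all terms homogeneous of weight $2k+2q+2$) together with $\lambda_2=0$ sharply restrict the surviving monomials and should be used to keep the computation in hand. Alternatively, \eqref{that} is precisely the case $i=1$ of the general formula for the structure polynomials $c_{2i,2j}^{2s}(\lambda)$ given in \cite[Theorem~2.5]{Nonhol}, from which it can be read off directly.
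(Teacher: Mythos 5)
Your proposal is correct and matches the paper's own argument: the paper likewise writes $L_2$ and $L_{2k}$ explicitly via the polynomials $v_{2k,2m}$ (with the conventions $\lambda_0=1$, $\lambda_2=0$) and verifies \eqref{that} by a direct calculation of the coefficients of each $\partial/\partial\lambda_{2q}$, which is exactly your reduction to checking both derivations on the generators $\lambda_{2q}$. The closing remark that \eqref{that} is the $i=1$ case of the structure polynomials $c_{2i,2j}^{2s}(\lambda)$ of \cite[Theorem~2.5]{Nonhol} is a legitimate alternative, but the main route you describe is essentially the paper's proof.
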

\text{}
\vspace{-16pt}

\begin{proof}
We have
\begin{align*}
 L_{2k} &= 2 \sum_{m = 2}^{k+1} \left( \sum_{s=0}^{m-2} (k + m - 2 s) \lambda_{2s} \lambda_{2 (k+m-s)}
 - {(m - 1) (2 g - k) \over 2 g + 1} \lambda_{2(k+1)} \lambda_{2(m-1)} \right) {\partial \over \partial \lambda_{2m}} + \\
 & + 2 \sum_{m = k+2}^{2 g +1} \left( \sum_{s=0}^{k} (k + m - 2 s) \lambda_{2s} \lambda_{2 (k+m-s)}
 - {(k+1) (2 g - m +2) \over 2 g + 1} \lambda_{2(k+1)} \lambda_{2(m-1)} \right) {\partial \over \partial \lambda_{2m}};\\
 L_{2} &=  \sum_{m = 2}^{2 g +1} \left(2 (m+1) \lambda_{2 (m+1)}
 - {4 (2 g - m +2) \over 2 g + 1} \lambda_{4} \lambda_{2(m-1)} \right) {\partial \over \partial \lambda_{2m}}.
\end{align*} 
Now we obtain \eqref{that} by a direct calculation of the coefficients.
\end{proof}

\begin{cor} \label{t345}
For $k = 3, 4, 5, \ldots, 2g-1$ the formula holds
\begin{equation} \label{e12}
Q_{2k} = {1 \over 2 (k-2)} [Q_2, Q_{2k-2}] - {2 (2 g - k + 1) \over (k-2) (2 g + 1)} \left( \lambda_{2k} Q_0 - \lambda_4 Q_{2k-4}\right). 
\end{equation}
\end{cor}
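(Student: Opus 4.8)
The plan is to obtain \eqref{e12} directly from Lemma~\ref{L33}, together with the property — built into the construction of the Schr\"odinger operators $Q_{2k}$ in \cite{Nonhol} via the condition \cite[equation~(1.3)]{Nonhol} — that the polynomial Lie algebra $\mathscr{L}_{Q}$ generated by $Q_0, Q_2, \ldots, Q_{4g-2}$ has exactly the same structure polynomials $c_{2i,2j}^{2k}(\lambda)\in P$ as the algebra $\mathscr{L}_{L}$ of \eqref{al}. Concretely, the solution of Problem~\ref{pro1} ensures that whenever an identity $[L_{2i},L_{2j}]=\sum_{s} c_{2i,2j}^{2s}(\lambda)L_{2s}$ holds in $\mathscr{L}_{L}$, the identity $[Q_{2i},Q_{2j}]=\sum_{s} c_{2i,2j}^{2s}(\lambda)Q_{2s}$ holds in $\mathscr{L}_{Q}$ with the very same polynomial coefficients.

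First I would specialize Lemma~\ref{L33}: replacing $k$ by $k-1$ in \eqref{that} gives, for $k\in\{3,4,\ldots,2g-1\}$ (for which all operators occurring lie in the generating set),
\[
[L_2,L_{2k-2}]=2(k-2)L_{2k}+\frac{4(2g-k+1)}{2g+1}\bigl(\lambda_{2k}L_0-\lambda_4 L_{2k-4}\bigr).
\]
Applying the transfer principle recalled above to this identity, the same relation holds for the $Q$-operators,
\[
[Q_2,Q_{2k-2}]=2(k-2)Q_{2k}+\frac{4(2g-k+1)}{2g+1}\bigl(\lambda_{2k}Q_0-\lambda_4 Q_{2k-4}\bigr).
\]
Finally, for $k\geqslant 3$ the factor $2(k-2)$ is a nonzero scalar, so dividing by it and solving for $Q_{2k}$ gives precisely \eqref{e12}.

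I do not expect any real obstacle. The only points requiring care are the index bookkeeping — for $k$ in the stated range $k-2\geqslant 1$, so the denominator $2(k-2)$ never vanishes, and the operators $Q_{2k-4},Q_{2k-2},Q_{2k}$ all belong to $\{Q_0,Q_2,\ldots,Q_{4g-2}\}$ — and the precise invocation of the transfer principle, which is exactly the hypothesis under which \cite{Nonhol} solved Problem~\ref{pro1}. A self-contained but heavier alternative would be to substitute $Q_{2k}=L_{2k}-H_{2k}$, handle the $L$-part by Lemma~\ref{L33}, and verify the corresponding identity for the $H$-part and the mixed commutators $[L_2,H_{2k-2}]+[H_2,L_{2k-2}]$ directly from the generating-function presentation \eqref{Qx} and Theorem~\ref{T31}; this is unnecessary given the cited result.
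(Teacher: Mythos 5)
Your proposal is correct and follows essentially the same route as the paper: the paper's proof likewise observes that the $Q_{2k}$ by construction solve Problem~\ref{pro1}, so the commutation relation \eqref{that} of Lemma~\ref{L33} transfers verbatim to the $Q$-operators, and \eqref{e12} is just that relation (with $k$ shifted to $k-1$) solved for $Q_{2k}$. Your explicit index bookkeeping and the remark that $2(k-2)\neq 0$ for $k\geqslant 3$ merely spell out what the paper leaves implicit.
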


This determines the operators $Q_{2k}$ for $k = 3, 4, 5, \ldots, 2g-1$ recurrently and explicitly.

\begin{proof}
 The operators $Q_{2k}$ by construction (see Section \ref{S2}) give a solution to Problem~\ref{pro1}. The expression \eqref{e12} is equivalent to \eqref{that} rewritten in $Q_{2k}$.
\end{proof}

\vfill

\eject

\section{Explicit formulas for Schr\"odinger operators in nonholonomic frame} \label{S4}

The following operators can be found in \cite{Nonhol, B2, BL0, BL} for genus $g=1$, in \cite{B2, BL0, BL} for genus $g=2$, in~\cite{4A, BB19} for genus $g=3$. The genus $g=4$ case is new. It follows from the formulas given in Section \ref{S3}.

\subsection{Schr\"odinger operators for genus $g=1$} \text{}

In this case, the explicit formulas for $\{H_{2k}\}$ in \eqref{HQ} are
\begin{align*}
H_0 &= z_1 \partial_1 - 1; &
H_2 &= {1 \over 2} \partial_1^2 - {1 \over 6} \lambda_4 z_1^2.
\end{align*}

\subsection{Schr\"odinger operators for genus $g=2$} \text{}

In this case, the explicit formulas for $\{H_{2k}\}$ in \eqref{HQ} are
\begin{align*}
H_0 &= z_1 \partial_1 + 3 z_3 \partial_3 - 3;\\
H_2 &= {1 \over 2} \partial_1^2 - {4 \over 5} \lambda_4 z_3 \partial_1 + z_1 \partial_3 - {3 \over 10} \lambda_4 z_1^2 + \left({3 \over 2} \lambda_8 - {2 \over 5} \lambda_4^2\right) z_3^2;\\
H_4 &= \partial_1 \partial_3 - {6 \over 5} \lambda_6 z_3 \partial_1 + \lambda_4 z_3 \partial_3 - {1 \over 5} \lambda_6 z_1^2 + \lambda_8 z_1 z_3 + \left(3 \lambda_{10} - {3 \over 5} \lambda_4 \lambda_6\right) z_3^2 - \lambda_4;\\
H_6 &= {1 \over 2} \partial_3^2 - {3 \over 5} \lambda_8 z_3 \partial_1 - {1 \over 10} \lambda_8 z_1^2 + 2 \lambda_{10} z_1 z_3 - {3 \over 10} \lambda_4 \lambda_8 z_3^2 - {1 \over 2} \lambda_6.
\end{align*}

\subsection{Schr\"odinger operators for genus $g=3$} \text{}

In this case, the explicit formulas for $\{H_{2k}\}$ in \eqref{HQ} are 
\begin{align*}
H_0 &=
z_1 \partial_1 + 3 z_3 \partial_3 + 5 z_5 \partial_5 - 6;
\\
H_2 &= {1 \over 2} \partial_1^2 - {8 \over 7} \lambda_4 z_3 \partial_1 + \left(z_1 - {4 \over 7} \lambda_4 z_5\right) \partial_3 + 3 z_3 \partial_5 + \\
& - {5 \over 14} \lambda_4 z_1^2 + \left({3 \over 2} \lambda_8 - {4 \over 7} \lambda_4^2\right) z_3^2 +
 \left({5 \over 2} \lambda_{12} - {2 \over 7} \lambda_4 \lambda_8 \right) z_5^2;\\
H_4 &= \partial_1 \partial_3 - {12 \over 7} \lambda_6 z_3 \partial_1 + \left(\lambda_4 z_3 - {6 \over 7} \lambda_6 z_5\right) \partial_3 + \left(z_1 + 3 \lambda_4 z_5 \right) \partial_5 - {2 \over 7} \lambda_6 z_1^2 + \\
& + \lambda_8 z_1 z_3 
 + \left(3 \lambda_{10} - {6 \over 7} \lambda_4 \lambda_6\right) z_3^2 + 3 \lambda_{12} z_3 z_5 + \left(5  \lambda_{14} - {3 \over 7} \lambda_6 \lambda_8 \right) z_5^2
  - 3 \lambda_4;
\\
H_6 &= {1 \over 2} \partial_3^2 + \partial_1 \partial_5 - {9 \over 7} \lambda_8 z_3 \partial_1 - {8 \over 7} \lambda_8 z_5 \partial_3 + \left(\lambda_4 z_3 + 2 \lambda_6 z_5\right) \partial_5 - {3 \over 14} \lambda_8 z_1^2 + 2 \lambda_{10} z_1 z_3 + \\ &  + \left({9 \over 2} \lambda_{12} - {9 \over 14} \lambda_4 \lambda_8\right) z_3^2 + \lambda_{12} z_1 z_5 + 6 \lambda_{14} z_3 z_5 + \left({3 \over 2} \lambda_4 \lambda_{12} - {4 \over 7} \lambda_8^2\right) z_5^2 - 2 \lambda_6;\\
H_8 &= \partial_3 \partial_5 - \left({6 \over 7} \lambda_{10} z_3 - \lambda_{12} z_5\right) \partial_1 - {10 \over 7} \lambda_{10} z_5 \partial_3 + \lambda_8 z_5 \partial_5 - {1 \over 7} \lambda_{10} z_1^2 + 3 \lambda_{12} z_1 z_3 +\\+ &\left( 6 \lambda_{14} - {3 \over 7} \lambda_4 \lambda_{10}\right) z_3^2 + 2 \lambda_{14} z_1 z_5 + \lambda_4 \lambda_{12} z_3 z_5 + \left(3 \lambda_{4} \lambda_{14} + \lambda_6 \lambda_{12} -{5 \over 7} \lambda_8 \lambda_{10}\right) z_5^2 - \lambda_8;\\
H_{10} &= {1 \over 2} \partial_5^2 - \left( {3 \over 7} \lambda_{12} z_3 - 2 \lambda_{14} z_5 \right) \partial_1 - {5 \over 7} \lambda_{12} z_5 \partial_3 - \\ & -{1 \over 14} \lambda_{12} z_1^2 + 4 \lambda_{14} z_1 z_3 - {3 \over 14} \lambda_4 \lambda_{12} z_3^2 + 2 \lambda_4 \lambda_{14} z_3 z_5 + \left(2 \lambda_6 \lambda_{14} - {5 \over 14} \lambda_8 \lambda_{12}\right) z_5^2  - {1 \over 2} \lambda_{10}.
\end{align*}

\subsection{Schr\"odinger operators for genus $g=4$} \text{}

In this case, the explicit formulas for $\{H_{2k}\}$ in \eqref{HQ} are 

\begin{align*}
H_0 &=
z_1 \partial_1 + 3 z_3 \partial_3 + 5 z_5 \partial_5 + 7 z_7 \partial_7 - 10;\\
H_2 &= {1 \over 2} \partial_1^2 + z_1 \partial_3 + 3 z_3 \partial_5 + 5 z_5 \partial_7 - {4 \over 9} \lambda_4 \left(3 z_{3} \partial_{1} + 2 z_{5} \partial_{3} + z_{7} \partial_{5} \right) - \\
& \qquad - {7 \over 18} \lambda_4 z_1^2 + \left({3 \over 2} \lambda_8 - {2 \over 3} \lambda_4^2 \right) z_3^2 + \left( {5 \over 2} \lambda_{12} - {4 \over 9} \lambda_4 \lambda_8 \right) z_5^2 + \left({7 \over 2} \lambda_{16} - {2 \over 9} \lambda_4 \lambda_{12}  \right) z_7^2;\\
H_4 &= \partial_1 \partial_3 + z_1 \partial_5 + 3 z_3 \partial_7 +
\lambda_4 \left( z_3 \partial_3 + 3 z_5 \partial_5 + 5 z_7 \partial_7 \right)
- {2 \over 3} \lambda_6 \left( 3 z_3 \partial_1 + 2 z_5 \partial_3 + z_7 \partial_5 \right)-  \\
& \qquad - {1 \over 3} \lambda_6 z_1^2 + \lambda_8 z_1 z_3 + \left(3 \lambda_{10} - \lambda_4 \lambda_6\right) z_3^2 + 3 \lambda_{12} z_3 z_5 + \left(5 \lambda_{14} - {2 \over 3} \lambda_6 \lambda_8\right) z_5^2 + \\
& \qquad + 5 \lambda_{16} z_5 z_7 + \left(7 \lambda_{18} - {1 \over 3} \lambda_6 \lambda_{12}\right) z_7^2 - 6 \lambda_4;
\\
H_6 &= {1 \over 2} \partial_3^2 + \partial_1 \partial_5 - {5 \over 3} \lambda_8 z_3 \partial_1 - {16 \over 9} \lambda_8 z_5 \partial_3 + \\ & \qquad + \left( \lambda_4 z_3 + 2 \lambda_6 z_5 -{8 \over 9} \lambda_8 z_7 \right) \partial_5 + \left(z_1 +3 \lambda_4 z_5 + 4 \lambda_6 z_7\right) \partial_7 -\\ & \qquad 
- {5 \over 18} \lambda_8 z_1^2 + 2 \lambda_{10} z_1 z_3 + \lambda_{12} z_1 z_5 - \left({5 \over 6} \lambda_4 \lambda_8 - {9 \over 2} \lambda_{12}\right) z_3^2 + 6 \lambda_{14} z_3 z_5 + 3 \lambda_{16} z_3 z_7 + \\
& \qquad + \left({3 \over 2} \lambda_4 \lambda_{12} - {8 \over 9} \lambda_8^2 + {15 \over 2} \lambda_{16}\right) z_5^2 + 10 \lambda_{18} z_5 z_7 - \left({4 \over 9} \lambda_8 \lambda_{12} - {5 \over 2} \lambda_4 \lambda_{16}\right) z_7^2
- {9 \over 2} \lambda_6;
\\
H_8 &= \partial_3 \partial_5 + \partial_1 \partial_7 - \left({4 \over 3} \lambda_{10} z_3 - \lambda_{12} z_5\right) \partial_1 - {20 \over 9} \lambda_{10} z_5 \partial_3 + \\ & \qquad + \left(\lambda_8 z_5 - {10 \over 9} \lambda_{10} z_7\right) \partial_5 + \left(\lambda_4 z_3 + 2 \lambda_6 z_5 + 3 \lambda_8 z_7\right) \partial_7 - \\
& \qquad - {2 \over 9} \lambda_{10} z_1^2 + 3 \lambda_{12} z_1 z_3 + 2 \lambda_{14} z_1 z_5 + \lambda_{16} z_1 z_7 - \left({2 \over 3} \lambda_4 \lambda_{10} - 6 \lambda_{14}\right) z_3^2 + \\
& \qquad +\left(\lambda_4 \lambda_{12} + 9 \lambda_{16}\right) z_3 z_5 + 6 \lambda_{18} z_3 z_7 - \left({10 \over 9} \lambda_8 \lambda_{10} - \lambda_6 \lambda_{12} - 3 \lambda_4 \lambda_{14} - 10 \lambda_{18}\right) z_5^2 +\\
& \qquad + 3 \lambda_4 \lambda_{16} z_5 z_7 - \left({5 \over 9} \lambda_{10} \lambda_{12} - 2 \lambda_6 \lambda_{16} - 5 \lambda_4 \lambda_{18}\right) z_7^2 - 3 \lambda_8;
\\
H_{10} &= {1 \over 2} \partial_5^2 + \partial_3 \partial_7
- (\lambda_{12} z_3 - 2 \lambda_{14} z_5 - \lambda_{16} z_7) \partial_1 - {5 \over 3} \lambda_{12} z_5 \partial_3 - \\ & \qquad - {4 \over 3} \lambda_{12} z_7 \partial_5 + (\lambda_8 z_5 + 2 \lambda_{10} z_7) \partial_7 + \\
& \qquad - {1 \over 6} \lambda_{12} z_1^2 + 4 \lambda_{14} z_1 z_3 + 3 \lambda_{16} z_1 z_5 + 2 \lambda_{18} z_1 z_7 - \left({1 \over 2} \lambda_4 \lambda_{12} - {15 \over 2} \lambda_{16} \right) z_3^2 + \\
& \qquad + \left(2 \lambda_4 \lambda_{14} + 12 \lambda_{18}\right) z_3 z_5 + \lambda_4 \lambda_{16} z_3 z_7 - \left({5 \over 6} \lambda_8 \lambda_{12} - 2 \lambda_6 \lambda_{14} - {9 \over 2} \lambda_4 \lambda_{16}\right) z_5^2 + \\
& \qquad + (2 \lambda_6 \lambda_{16} +6 \lambda_4 \lambda_{18}) z_5 z_7 - \left({2 \over 3} \lambda_{12}^2 - {3 \over 2} \lambda_8 \lambda_{16} - 4 \lambda_6 \lambda_{18}\right) z_7^2 - 2 \lambda_{10};
\end{align*}
\vfill 

\eject

\begin{align*}
H_{12} &= \partial_5 \partial_7 - \left({2 \over 3} \lambda_{14} z_3 - 3 \lambda_{16} z_5 - 2 \lambda_{18} z_7\right) \partial_1 - \left({10 \over 9} \lambda_{14} z_5 - \lambda_{16} z_7 \right) \partial_3 - \\ & \qquad - {14 \over 9} \lambda_{14} z_7 \partial_5 + \lambda_{12} z_7 \partial_7 - \\ & \qquad - {1 \over 9} \lambda_{14} z_1^2 + 5 \lambda_{16} z_1 z_3 + 4 \lambda_{18} z_1 z_5 -  \left({1 \over 3} \lambda_4 \lambda_{14} - 9 \lambda_{18}\right) z_3^2 + \\
& \qquad + 3 \lambda_4 \lambda_{16} z_3 z_5 + 2 \lambda_4 \lambda_{18} z_3 z_7 - \left({5 \over 9} \lambda_8 \lambda_{14} - 3 \lambda_6 \lambda_{16} - 6 \lambda_4 \lambda_{18}\right) z_5^2 +  \\
& \qquad + \left(\lambda_8 \lambda_{16} + 4 \lambda_6 \lambda_{18}\right) z_5 z_7 - \left({7 \over 9} \lambda_{12} \lambda_{14} - \lambda_{10} \lambda_{16} - 3 \lambda_8 \lambda_{18}\right) z_7^2 - \lambda_{12};\\
H_{14} &= {1 \over 2} \partial_7^2 - \left({1 \over 3} \lambda_{16} z_3 - 4 \lambda_{18} z_5\right) \partial_1 - \left({5 \over 9} \lambda_{16} z_5 - 2 \lambda_{18} z_7\right) \partial_3 -{7 \over 9} \lambda_{16} z_7 \partial_5 - \\
& \qquad - {1 \over 18} \lambda_{16} z_1^2 + 6 \lambda_{18} z_1 z_3 - {1 \over 6} \lambda_4 \lambda_{16} z_3^2 + 4 \lambda_4 \lambda_{18} z_3 z_5 - \left( {5 \over 18} \lambda_8 \lambda_{16} - 4 \lambda_6 \lambda_{18} \right) z_5^2 + \\
& \qquad + 2 \lambda_8 \lambda_{18} z_5 z_7 - \left({7 \over 18} \lambda_{12} \lambda_{16} - 2 \lambda_{10} \lambda_{18}\right) z_7^2 - {1 \over 2} \lambda_{14}.
\end{align*}

\eject

\section{Application: Differentiation operators for genus $g=4$}

Recall that $\mathcal{F}$ is the field of hyperelliptic functions of genus $g$. In this section we consider the problem of constructing the Lie algebra of derivations of $\mathcal{F}$, i.e. to~find~$3g$ independent differential operators $\mathcal{L}$ such that $\mathcal{L} \mathcal{F} \subset \mathcal{F}$. The exposition to the problem, as well as a general approach to the solution was developed in \cite{BL0, BL}. An~overview is given in \cite{BEL18}. In~\cite{FS, B2, B3} an explicit solution to this problem has been obtained for $g=1,2,3$. Here we give an explicit answer to this problem in the genus $g=4$ case.

We introduce a ring of functions $\mathcal{R}_\varphi$. The generators of this graded ring over $\mathbb{Q}[\lambda]$ are the functions
$\psi_{k_1 \ldots k_n} = - \partial_{k_1} \cdots \partial_{k_n} \ln \varphi$,
where $n \geqslant 2$, $k_s \in \{ 1, 3, \ldots, 2 g - 1\}$, and $\wt \psi_{k_1 \ldots k_n} = k_1 + \ldots + k_n$, $\wt \lambda_k = k$.
We introduce the operators:

\begin{align*}
\mathcal{L}_0 &= L_0
- z_1 \partial_1 - 3 z_3 \partial_3 - 5 z_5 \partial_5 - 7 z_7 \partial_7;
\\
\mathcal{L}_2 &= L_2 
- \psi_1 \partial_1 + {4 \over 3} \lambda_4 z_{3} \partial_1 - \left(z_1 - {8 \over 9} \lambda_4 z_{5}\right) \partial_3 - \left(3 z_3 - {4 \over 9} \lambda_4 z_{7}\right) \partial_5 - 5 z_5 \partial_7;
\\
\mathcal{L}_4 &= L_4
- \psi_3 \partial_1 - \psi_1 \partial_3 - \\ & \quad
+ 2 \lambda_6 z_3 \partial_1
- \left( \lambda_4 z_3
- {4 \over 3} \lambda_6 z_5 \right) \partial_3 
- \left( z_1 + 3 \lambda_4 z_5 
- {2 \over 3} \lambda_6 z_7\right) \partial_5
- \left( 3 z_3 + 5 \lambda_4 z_7 \right) \partial_7;
\\
\mathcal{L}_6 &= L_6
- \psi_5 \partial_1 - \psi_3 \partial_3 - \psi_1 \partial_5 + \\ & \quad
+ {5 \over 3} \lambda_8 z_3 \partial_1 + {16 \over 9} \lambda_8 z_5 \partial_3 - \left( \lambda_4 z_3 + 2 \lambda_6 z_5 -{8 \over 9} \lambda_8 z_7 \right) \partial_5 - \left(z_1 +3 \lambda_4 z_5 + 4 \lambda_6 z_7\right) \partial_7;
\\
\mathcal{L}_8 &= L_8
- \psi_7 \partial_1 - \psi_5 \partial_3 - \psi_3 \partial_5 - \psi_1 \partial_7 + \left({4 \over 3} \lambda_{10} z_3 - \lambda_{12} z_5\right) \partial_1 + \\ & \quad + {20 \over 9} \lambda_{10} z_5 \partial_3 - \left(\lambda_8 z_5 - {10 \over 9} \lambda_{10} z_7\right) \partial_5 - \left(\lambda_4 z_3 + 2 \lambda_6 z_5 + 3 \lambda_8 z_7\right) \partial_7;
\\
\mathcal{L}_{10} &= L_{10} 
- \psi_7 \partial_3 - \psi_5 \partial_5 - \psi_3 \partial_7  + \\ & \quad
+ (\lambda_{12} z_3 - 2 \lambda_{14} z_5 - \lambda_{16} z_7) \partial_1 + {5 \over 3} \lambda_{12} z_5 \partial_3 + {4 \over 3} \lambda_{12} z_7 \partial_5 - (\lambda_8 z_5 + 2 \lambda_{10} z_7) \partial_7;
\\
\mathcal{L}_{12} &= L_{12}
- \psi_7 \partial_5 - \psi_5 \partial_7 +\\ & \quad  + \left({2 \over 3} \lambda_{14} z_3 - 3 \lambda_{16} z_5 - 2 \lambda_{18} z_7\right) \partial_1 + \left({10 \over 9} \lambda_{14} z_5 - \lambda_{16} z_7 \right) \partial_3 + {14 \over 9} \lambda_{14} z_7 \partial_5 - \lambda_{12} z_7 \partial_7;
\\
\mathcal{L}_{14} &= L_{14}
- \psi_7 \partial_7 + \left({1 \over 3} \lambda_{16} z_3 - 4 \lambda_{18} z_5\right) \partial_1 + \left({5 \over 9} \lambda_{16} z_5 - 2 \lambda_{18} z_7\right) \partial_3 + {7 \over 9} \lambda_{16} z_7 \partial_5.
\end{align*}

Denote the Lie algebra with this generators by  $\mathscr{L}_\mathcal{L}$.

\begin{thm} \label{Tdif}
If the function $\varphi$ satisfies the system of heat equations in a nonholonomic frame for genus $4$, then the algebra $\mathscr{L}_\mathcal{L}$ is an algebra of derivations of the ring~$\mathcal{R}_\varphi$.
\end{thm}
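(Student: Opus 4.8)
\emph{Proof plan.} The plan is to show that each generator $\mathcal{L}_{2k}$, $k=0,1,\dots,7$, carries $\mathcal{R}_\varphi$ into itself. Each $\mathcal{L}_{2k}$ is a first order linear differential operator in $(z,\lambda)$ with no free term (this is visible directly in the formulas above), hence a derivation of the ring of all differentiable functions; and the derivations of that ring which preserve $\mathcal{R}_\varphi$ form a Lie subalgebra. So once the claim is proved, every element of the Lie algebra $\mathscr{L}_\mathcal{L}$ generated by $\mathcal{L}_0,\dots,\mathcal{L}_{14}$ will be a derivation of $\mathcal{R}_\varphi$, which is the assertion. Since $\mathcal{R}_\varphi$ is generated as a $\mathbb{Q}$-algebra by the $\lambda_j$ together with the functions $\psi_{k_1\dots k_n}$ with $n\geqslant 2$, and $\mathcal{L}_{2k}$ is a derivation, it is enough to check that $\mathcal{L}_{2k}\lambda_j$ and $\mathcal{L}_{2k}\psi_{k_1\dots k_n}$ lie in $\mathcal{R}_\varphi$. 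The first is immediate: the $z\partial$- and $\psi\partial$-parts of $\mathcal{L}_{2k}$ annihilate $\lambda_j$, so $\mathcal{L}_{2k}\lambda_j=L_{2k}\lambda_j=v_{2k+2,2j-2}(\lambda)\in\mathbb{Q}[\lambda]\subset\mathcal{R}_\varphi$.

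The step on which everything rests is the structural identity obtained by comparing the operators $\mathcal{L}_{2k}$ of this section with the operators $H_{2k}$ of Theorem~\ref{T31} (equivalently, with the lists of Section~\ref{S4}): for $k=0,\dots,7$,
\[
\mathcal{L}_{2k}=L_{2k}-\Psi_{2k}-B_{2k},\qquad
\Psi_{2k}=\sum_{a,b}\alpha^{(k)}_{a,b}(\lambda)\,\psi_a\partial_b,\qquad
B_{2k}=\sum_{a,b}\beta^{(k)}_{a,b}(\lambda)\,z_a\partial_b,
\]
where $\alpha^{(k)}_{a,b}$ and $\beta^{(k)}_{a,b}$ are the coefficients of $\tfrac12\partial_a\partial_b$ and of $z_a\partial_b$ in $H_{2k}$ (see \eqref{e2} and Lemma~\ref{L21}; in particular $\alpha^{(k)}_{a,b}=1$ for $a+b=2k$ and $0$ otherwise), and $\psi_a$ denotes the first logarithmic derivative $\partial_a\ln\varphi$. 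For $g=4$ this is a finite check against the explicit formulas of Sections~\ref{S3}--\ref{S4}.

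Granting this, I would apply $L_{2k}$, $\Psi_{2k}$, $B_{2k}$ to $\psi_{k_1\dots k_n}$ ($n\geqslant 2$) separately. From $\partial_b\psi_{k_1\dots k_n}=\psi_{b\,k_1\dots k_n}$ one gets at once $\Psi_{2k}\psi_{k_1\dots k_n}=\sum_{a,b}\alpha^{(k)}_{a,b}(\partial_a\ln\varphi)\,\psi_{b\,k_1\dots k_n}$ and $B_{2k}\psi_{k_1\dots k_n}=\sum_{a,b}\beta^{(k)}_{a,b}\,z_a\,\psi_{b\,k_1\dots k_n}$. For $L_{2k}$ I would use the heat equation $Q_{2k}\varphi=0$, i.e.\ $L_{2k}\varphi=H_{2k}\varphi$; since $L_{2k}$ has no $z$-part, $L_{2k}\ln\varphi=\varphi^{-1}H_{2k}\varphi$, and with $\zeta_c=\partial_c\ln\varphi$, $\varphi^{-1}\partial_a\partial_b\varphi=\zeta_a\zeta_b-\psi_{ab}$, $\varphi^{-1}\partial_b\varphi=\zeta_b$ this becomes
\[
L_{2k}\ln\varphi=\tfrac12\sum\alpha^{(k)}_{a,b}(\zeta_a\zeta_b-\psi_{ab})+\sum\beta^{(k)}_{a,b}z_a\zeta_b+\tfrac12\sum\gamma^{(k)}_{a,b}z_az_b+\delta^{(k)}.
\]
Since $L_{2k}$ commutes with the $z$-derivatives, $L_{2k}\psi_{k_1\dots k_n}=-\partial_{k_1}\!\cdots\partial_{k_n}(L_{2k}\ln\varphi)$; expanding the right-hand side by the Leibniz rule with $n\geqslant 2$, the term $-\psi_{ab}$ contributes $\tfrac12\sum\alpha^{(k)}_{a,b}\psi_{ab\,k_1\dots k_n}\in\mathcal{R}_\varphi$, the terms $\gamma^{(k)}_{a,b}z_az_b$ and $\delta^{(k)}$ contribute a polynomial in $\lambda$ (or $0$ when $n>2$), and the terms $\zeta_a\zeta_b$, $z_a\zeta_b$ contribute elements of $\mathcal{R}_\varphi$ together with exactly $\sum\alpha^{(k)}_{a,b}\zeta_a\psi_{b\,k_1\dots k_n}$ and $\sum\beta^{(k)}_{a,b}z_a\psi_{b\,k_1\dots k_n}$ (the summands in which one differentiated factor is left untouched; the symmetry of $\alpha^{(k)}$ fixes the coefficient of the first sum). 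Subtracting $\Psi_{2k}\psi_{k_1\dots k_n}$ and $B_{2k}\psi_{k_1\dots k_n}$ then cancels these two remainders precisely, so $\mathcal{L}_{2k}\psi_{k_1\dots k_n}\in\mathcal{R}_\varphi$, which finishes the argument.

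The routine part is the Leibniz bookkeeping for $L_{2k}\psi_{k_1\dots k_n}$; the point where the matter really rests is the structural identity $\mathcal{L}_{2k}=L_{2k}-\Psi_{2k}-B_{2k}$, since it is precisely the agreement of the first order $z$- and $\psi$-corrections in $\mathcal{L}_{2k}$ with the $z\partial$-part and the second order part of $H_{2k}$ that forces the non-polynomial remainders to cancel. Conceptually one has $\mathcal{L}_{2k}=Q_{2k}+\tfrac12\sum\alpha^{(k)}_{a,b}\partial_a\partial_b-\Psi_{2k}+\tfrac12\sum\gamma^{(k)}_{a,b}z_az_b+\delta^{(k)}$, which also explains why the same scheme works for every genus $g$.
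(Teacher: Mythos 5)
Your argument is correct, and it is essentially the proof the paper has in mind: the paper simply defers to the genus~$3$ case (Theorem~5.9 of \cite{BB19}), whose scheme is exactly yours --- write $\mathcal{L}_{2k}=L_{2k}-\sum\alpha^{(k)}_{a,b}\psi_a\partial_b-\sum\beta^{(k)}_{a,b}z_a\partial_b$, use $Q_{2k}\varphi=0$ to compute $L_{2k}\ln\varphi=\varphi^{-1}H_{2k}\varphi$, and check that the first-order corrections cancel the terms involving $\zeta_a$ and $z_a$ so that $\mathcal{L}_{2k}$ maps the generators $\lambda_j$ and $\psi_{k_1\ldots k_n}$ ($n\geqslant 2$) back into $\mathcal{R}_\varphi$. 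Your identification of the correction terms with the coefficients $\alpha^{(k)}_{a,b}$, $\beta^{(k)}_{a,b}$ of $H_{2k}$ checks out against the explicit genus-$4$ lists in Sections~\ref{S3}--\ref{S4}, so no gap remains.
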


\textit{The proof} follows the proof of Theorem 5.9 in \cite{BB19}.

\begin{cor} 
For $g = 4$ the algebra $\mathscr{L}_\mathcal{L}$ for $\varphi = \sigma$ is an algebra of derivations of $\mathcal{F}$.
\end{cor}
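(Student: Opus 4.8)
\textit{Proof of the Corollary.}
The plan is to deduce the statement from Theorem~\ref{Tdif} in three short steps: check that $\varphi=\sigma$ satisfies the hypothesis of that theorem, identify the ring $\mathcal{R}_\sigma$ with the ring $\mathcal{P}$, and pass from derivations of $\mathcal{P}$ to derivations of its field of fractions~$\mathcal{F}$.

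First I would verify the hypothesis of Theorem~\ref{Tdif} for $\varphi=\sigma$. By the construction recalled in Section~\ref{S2} (following \cite{Nonhol}), for $g=4$ the operators $Q_0,Q_2,\dots,Q_{14}$ give a solution to Problem~\ref{pro1}, i.e.\ they furnish a representation of the polynomial Lie algebra \eqref{al} by operators on $\CU$. By the main result of \cite{Nonhol}, the associated system of heat equations in the nonholonomic frame $Q_{2k}\varphi=0$, $k=0,\dots,7$, see \eqref{e3}, is solved by the genus-$4$ hyperelliptic sigma function $\varphi=\sigma(z,\lambda)$. Hence $\sigma$ satisfies the system of heat equations in a nonholonomic frame for genus $4$, and Theorem~\ref{Tdif} applies: $\mathscr{L}_\mathcal{L}$ is an algebra of derivations of the ring $\mathcal{R}_\sigma$, meaning that each generator $\mathcal{L}_{2k}$ maps $\mathcal{R}_\sigma$ into itself and the bracket of any two of them is again a derivation of $\mathcal{R}_\sigma$.

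Next I would identify $\mathcal{R}_\sigma$ with $\mathcal{P}$. By \eqref{n}, for $\varphi=\sigma$ the generator $\psi_{k_1\ldots k_n}=-\partial_{k_1}\cdots\partial_{k_n}\ln\sigma$ of $\mathcal{R}_\sigma$ equals $\wp_{k_1,\ldots,k_n}$ for all $n\geqslant 2$ and $k_s\in\{1,3,5,7\}$. Since $\mathcal{R}_\sigma$ and $\mathcal{P}$ are, by definition, the graded rings generated over the polynomials in $\lambda$ by precisely these functions, we get $\mathcal{R}_\sigma=\mathcal{P}$; in particular $\mathcal{F}$, the field of hyperelliptic functions of genus $4$, is the field of fractions of $\mathcal{R}_\sigma$. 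It then remains to extend derivations from $\mathcal{R}_\sigma=\mathcal{P}$ to $\mathcal{F}=\operatorname{Frac}(\mathcal{P})$: any derivation $D$ of a commutative integral domain $R$ has a unique extension to $\operatorname{Frac}(R)$ given by the quotient rule $D(a/b)=(b\,Da-a\,Db)/b^2$, which is well defined (it depends only on the fraction $a/b$) and Leibniz. Applying this to each $\mathcal{L}_{2k}$ and to their commutators shows that every element of $\mathscr{L}_\mathcal{L}$ extends to a derivation $\mathcal{L}$ of $\mathcal{F}$ with $\mathcal{L}\mathcal{F}\subset\mathcal{F}$, and these extensions again form a Lie algebra under the commutator. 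This proves that for $\varphi=\sigma$ and $g=4$ the algebra $\mathscr{L}_\mathcal{L}$ is an algebra of derivations of $\mathcal{F}$. \qed

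The only genuinely non-formal ingredient is the input quoted from \cite{Nonhol}, that the genus-$4$ sigma function actually solves the nonholonomic system \eqref{e3}; the identification $\mathcal{R}_\sigma=\mathcal{P}$ and the passage to the field of fractions are routine once Theorem~\ref{Tdif} is in hand. I expect the main subtlety, if any, to be bookkeeping: making sure that the coefficient rings in the definitions of $\mathcal{R}_\varphi$ and $\mathcal{P}$ agree (both generated over the polynomial ring in $\lambda$), so that the two rings coincide on the nose rather than merely up to a field extension. As a concluding remark, adjoining the obvious derivations $\partial_1,\partial_3,\partial_5,\partial_7$ of $\mathcal{F}$ to $\mathscr{L}_\mathcal{L}$ produces the required $3g=12$ independent differential operators preserving $\mathcal{F}$, but that refinement is not needed for the statement above.
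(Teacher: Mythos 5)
Your proof is correct and follows essentially the same route the paper intends: the corollary is an immediate consequence of Theorem~\ref{Tdif} applied to $\varphi=\sigma$, which by \cite{Nonhol} solves the genus-$4$ system of heat equations \eqref{e3} in the nonholonomic frame. The remaining steps you spell out --- identifying $\mathcal{R}_\sigma$ with the ring generated by the $\wp_{k_1,\ldots,k_n}$ via \eqref{n}, and extending the derivations to the field of fractions $\mathcal{F}$ by the quotient rule --- are exactly the (implicit) content of the paper's deduction.
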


\vfill 
\eject

We set
\begin{align*}
w_{0,1} &= \psi_1, \qquad w_{0,3} = 3 \psi_3, \qquad w_{2k,5} = 5 \psi_5, \qquad w_{0,7} = 7 \psi_7,  
\\
w_{2,1} &= {1 \over 2} \psi_{111} + \psi_3 - {7 \over 9} \lambda_4 z_1, \\
w_{2,3} &= {1 \over 2} \psi_{113} - {4 \over 3} \lambda_4 \psi_1 + 3 \psi_5 + \left(3 \lambda_8 - {4 \over 3} \lambda_4^2\right) z_3,\\
w_{2,5} &= {1 \over 2} \psi_{115} - {8 \over 9} \lambda_4 \psi_3 + 5 \psi_7 + \left(5 \lambda_{12} - {8 \over 9} \lambda_4 \lambda_8\right) z_5, \\
w_{2,7} &= {1 \over 2} \psi_{117} - {4 \over 9} \lambda_4 \psi_5 + \left(7 \lambda_{16} - {4 \over 9} \lambda_4 \lambda_{12}\right) z_7.  
\\
w_{4,1} &= \psi_{113} + \psi_5 - {2 \over 3} \lambda_6 z_1 + \lambda_8 z_3,\\  
w_{4,3} &= \psi_{133} - 2 \lambda_6 \psi_1 + \lambda_4 \psi_3 + 3 \psi_7 + \lambda_8 z_1 +  \left(6 \lambda_{10} - 2 \lambda_4 \lambda_6\right) z_3 + 3 \lambda_{12} z_5,\\ 
w_{4,5} &= \psi_{135} - {4 \over 3} \lambda_6 \psi_3 + 3 \lambda_4 \psi_5 + 3 \lambda_{12} z_3 + \left(10 \lambda_{14} - {4 \over 3} \lambda_6 \lambda_8\right) z_5 + 5 \lambda_{16} z_7,\\
w_{4,5} &= \psi_{137} - {2 \over 3} \lambda_6 \psi_5 + 5 \lambda_4 \psi_7 + 5 \lambda_{16} z_5 + \left(14 \lambda_{18} - {2 \over 3} \lambda_6 \lambda_{12}\right) z_7.
\\
w_{6,1} &= {1 \over 2} \psi_{133} + \psi_{115} + \psi_7 - {5 \over 9} \lambda_8 z_1 + 2 \lambda_{10} z_3 + \lambda_{12} z_5;\\
w_{6,3} &= {1 \over 2} \psi_{333} + \psi_{135} - {5 \over 3} \lambda_8 \psi_1 + \lambda_4 \psi_5 + 2 \lambda_{10} z_1 - \left({5 \over 3} \lambda_4 \lambda_8 - 9 \lambda_{12}\right) z_3 + 6 \lambda_{14} z_5 + 3 \lambda_{16} z_7;\\
w_{6,5} &= {1 \over 2} \psi_{335} + \psi_{155} - {16 \over 9} \lambda_8 \psi_3 + 2 \lambda_6 \psi_5 + 3 \lambda_4 \psi_7 + \\
& \qquad + \lambda_{12} z_1 + 6 \lambda_{14} z_3 + \left(3 \lambda_4 \lambda_{12} - {16 \over 9} \lambda_8^2 + 15 \lambda_{16}\right) z_5 + 10 \lambda_{18} z_7;\\
w_{6,7} &= {1 \over 2} \psi_{337} + \psi_{157} -{8 \over 9} \lambda_8 \psi_5 + 4 \lambda_6 \psi_7 + 3 \lambda_{16} z_3 + 10 \lambda_{18} z_5 - \left({8 \over 9} \lambda_8 \lambda_{12} - 5 \lambda_4 \lambda_{16}\right) z_7.
\\
w_{8,1} &= \psi_{135} + \psi_{117} - {4 \over 9} \lambda_{10} z_1 + 3 \lambda_{12} z_3 + 2 \lambda_{14} z_5 + \lambda_{16} z_7;\\
w_{8,3} &= \psi_{335} + \psi_{137} - {4 \over 3} \lambda_{10} \psi_1 + \lambda_4 \psi_7 + \\
& \qquad + 3 \lambda_{12} z_1 - \left({4 \over 3} \lambda_4 \lambda_{10} - 12 \lambda_{14}\right) z_3 +\left(\lambda_4 \lambda_{12} + 9 \lambda_{16}\right) z_5 + 6 \lambda_{18} z_7;\\
w_{8,5} &= \psi_{355} + \psi_{157} + \lambda_{12} \psi_1 - {20 \over 9} \lambda_{10} \psi_3 + \lambda_8 \psi_5 + 2 \lambda_6 \psi_7 + 2 \lambda_{14} z_1 + \\
& \qquad +\left(\lambda_4 \lambda_{12} + 9 \lambda_{16}\right) z_3 - \left({20 \over 9} \lambda_8 \lambda_{10} - 2 \lambda_6 \lambda_{12} - 6 \lambda_4 \lambda_{14} - 20 \lambda_{18}\right) z_5 + 6 \lambda_4 \lambda_{16} z_7;\\
w_{8,7} &= \psi_{357} + \psi_{177} - {10 \over 9} \lambda_{10} \psi_5 + 3 \lambda_8 \psi_7 + \\
& \qquad + \lambda_{16} z_1 + 6 \lambda_{18} z_3 + 3 \lambda_4 \lambda_{16} z_5 - \left({10 \over 9} \lambda_{10} \lambda_{12} - 4 \lambda_6 \lambda_{16} - 10 \lambda_4 \lambda_{18}\right) z_7;
\end{align*}

\vfill
\eject

\begin{align*}
w_{10,1} &= {1 \over 2} \psi_{155} + \psi_{137}
- {1 \over 3} \lambda_{12} z_1 + 4 \lambda_{14} z_3 + 3 \lambda_{16} z_5 + 2 \lambda_{18} z_7;\\
w_{10,3} &=  {1 \over 2} \psi_{355} + \psi_{337}
- \lambda_{12} \psi_1 + \\
& \qquad + 4 \lambda_{14} z_1 - \left(\lambda_4 \lambda_{12} - 15 \lambda_{16} \right) z_3 + \left(2 \lambda_4 \lambda_{14} + 12 \lambda_{18}\right) z_5 + \lambda_4 \lambda_{16} z_7;\\
w_{10,5} &= {1 \over 2} \psi_{555} + \psi_{357}
+ 2 \lambda_{14} \psi_1 - {5 \over 3} \lambda_{12} \psi_3 + \lambda_8 \psi_7 + 3 \lambda_{16} z_1 + \\
& + \left(2 \lambda_4 \lambda_{14} + 12 \lambda_{18}\right) z_3 - \left({5 \over 3} \lambda_8 \lambda_{12} - 4 \lambda_6 \lambda_{14} - 9 \lambda_4 \lambda_{16}\right) z_5 + (2 \lambda_6 \lambda_{16} +6 \lambda_4 \lambda_{18}) z_7;\\
w_{10,7} &=  {1 \over 2} \psi_{557} + \psi_{377}
+ \lambda_{16} \psi_1 - {4 \over 3} \lambda_{12}  \psi_5 + 2 \lambda_{10} \psi_7 + \\
& \qquad + 2 \lambda_{18} z_1 + \lambda_4 \lambda_{16} z_3 + (2 \lambda_6 \lambda_{16} +6 \lambda_4 \lambda_{18}) z_5 - \left({4 \over 3} \lambda_{12}^2 - 3 \lambda_8 \lambda_{16} - 8 \lambda_6 \lambda_{18}\right) z_7;
\\
w_{12,1} &= \psi_{157} - {2 \over 9} \lambda_{14} z_1 + 5 \lambda_{16} z_3 + 4 \lambda_{18} z_5;\\
w_{12,3} &= \psi_{357} - {2 \over 3} \lambda_{14} \psi_1 + 5 \lambda_{16} z_1 -  \left({2 \over 3} \lambda_4 \lambda_{14} - 18 \lambda_{18}\right) z_3 + 3 \lambda_4 \lambda_{16} z_5 + 2 \lambda_4 \lambda_{18} z_7;\\
w_{12,5} &= \psi_{557} + 3 \lambda_{16} \psi_1 - {10 \over 9} \lambda_{14} \psi_3 + 4 \lambda_{18} z_1 + \\
& \qquad + 3 \lambda_4 \lambda_{16} z_3 - \left({10 \over 9} \lambda_8 \lambda_{14} - 6 \lambda_6 \lambda_{16} - 12 \lambda_4 \lambda_{18}\right) z_5 + \left(\lambda_8 \lambda_{16} + 4 \lambda_6 \lambda_{18}\right) z_7;\\
w_{12,7} &= \psi_{577} + 2 \lambda_{18} \psi_1 + \lambda_{16} \psi_3 - {14 \over 9} \lambda_{14} \psi_5 + \lambda_{12} \psi_7 +  \\
& \qquad + 2 \lambda_4 \lambda_{18} z_3 + \left(\lambda_8 \lambda_{16} + 4 \lambda_6 \lambda_{18}\right) z_7 - \left({14 \over 9} \lambda_{12} \lambda_{14} - 2 \lambda_{10} \lambda_{16} - 6 \lambda_8 \lambda_{18}\right) z_7;
\\
w_{14,1} &= {1 \over 2} \psi_{177} - {1 \over 9} \lambda_{16} z_1 + 6 \lambda_{18} z_3;\\
w_{14,3} &= {1 \over 2} \psi_{377} - {1 \over 3} \lambda_{16} \psi_1 + 6 \lambda_{18} z_1 - {1 \over 3} \lambda_4 \lambda_{16} z_3 + 4 \lambda_4 \lambda_{18} z_5;\\
w_{14,5} &= {1 \over 2} \psi_{577} + 4 \lambda_{18} \psi_1 - {5 \over 9} \lambda_{16}\psi_3 + 4 \lambda_4 \lambda_{18} z_3 - \left( {5 \over 9} \lambda_8 \lambda_{16} - 8 \lambda_6 \lambda_{18} \right) z_5 + 2 \lambda_8 \lambda_{18} z_7;\\
w_{14,7} &= {1 \over 2} \psi_{777} + 2 \lambda_{18} \psi_3 - {7 \over 9} \lambda_{16} \psi_5 + 2 \lambda_8 \lambda_{18} z_5 - \left({7 \over 9} \lambda_{12} \lambda_{16} - 4 \lambda_{10} \lambda_{18}\right) z_7.
\end{align*}

We obtain a genus $g=4$ analog of Theorems 4.1, 4.2 and 4.3 from \cite{BB19}.
\begin{thm}
For $g = 4$ a solution $\varphi$ of the system of heat equations \eqref{e3} gives a~solution $(\psi_1, \psi_3, \psi_5, \psi_7) = (\partial_1 \ln \varphi, \partial_3 \ln \varphi, \partial_5 \ln \varphi, \partial_7 \ln \varphi)$ of the system of nonlinear differential equations that we call an analog of the Burgers equation for $g = 4$:
\begin{equation} \label{Bu2}
\mathcal{L}_{2k} (\psi_1, \psi_3, \psi_5, \psi_7) = (w_{2k,1}, w_{2k,3}, w_{2k,5}, w_{2k,7}), \qquad k = 0,1,2,3,4,5,6,7.
\end{equation}
\end{thm}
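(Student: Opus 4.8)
The plan is to obtain the system \eqref{Bu2} as a direct consequence of the heat equations \eqref{e3}, following the pattern of the genus $g=1,2,3$ analogues (Theorems 4.1--4.3 of \cite{BB19}). The starting observation is that each operator $L_{2k}$ differentiates only in the variables $\lambda$, hence is a derivation that commutes with every $\partial_j$, $j\in\{1,3,5,7\}$. Writing $Q_{2k}\varphi=0$ as $L_{2k}\varphi=H_{2k}\varphi$ and dividing by $\varphi$ gives $L_{2k}\ln\varphi=(H_{2k}\varphi)/\varphi$; applying $\partial_j$ and using $[\partial_j,L_{2k}]=0$ then yields the key identity
\[
L_{2k}\psi_j=\partial_j\!\left(\frac{H_{2k}\varphi}{\varphi}\right),\qquad j\in\{1,3,5,7\},\quad k=0,1,\dots,7 .
\]

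Next I would rewrite the right-hand side through the generators of $\mathcal{R}_\varphi$. Using $(\partial_b\varphi)/\varphi=\psi_b$ and $(\partial_a\partial_b\varphi)/\varphi=\psi_a\psi_b-\psi_{ab}$, the expression \eqref{e2} for $H_{2k}$ gives
\[
\frac{H_{2k}\varphi}{\varphi}=\tfrac12\sum_{a,b}\alpha^{(k)}_{a,b}(\psi_a\psi_b-\psi_{ab})+\sum_{a,b}\beta^{(k)}_{a,b}z_a\psi_b+\tfrac12\sum_{a,b}\gamma^{(k)}_{a,b}z_az_b+\delta^{(k)},
\]
and differentiating in $z_j$ (using $\partial_j\psi_a=-\psi_{ja}$, $\partial_j\psi_{ab}=\psi_{jab}$, $\partial_j\delta^{(k)}=0$, and the symmetry of $\alpha^{(k)},\gamma^{(k)}$) produces
\[
L_{2k}\psi_j=-\sum_{a,b}\bigl(\alpha^{(k)}_{a,b}\psi_a+\beta^{(k)}_{a,b}z_a\bigr)\psi_{jb}+\tfrac12\sum_{a,b}\alpha^{(k)}_{a,b}\psi_{jab}+\sum_b\beta^{(k)}_{j,b}\psi_b+\sum_b\gamma^{(k)}_{j,b}z_b .
\]

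The decisive step is that the operators $\mathcal{L}_{2k}$ are chosen precisely so that the first sum is cancelled. Comparing the explicit formulas for $\mathcal{L}_{2k}$ with those for $H_{2k}$, one verifies the structural identity $\mathcal{L}_{2k}=L_{2k}-G_{2k}$, where $G_{2k}\phi=\sum_b\bigl(\sum_a(\alpha^{(k)}_{a,b}\psi_a+\beta^{(k)}_{a,b}z_a)\bigr)\partial_b\phi$, the coefficient of $\partial_b$ being $\psi_{2k-b}$ together with the $b$-th column sum $\sum_a\beta^{(k)}_{a,b}z_a$ of the $z\partial$-part of $H_{2k}$. Since $\partial_b\psi_j=-\psi_{jb}$, we have $G_{2k}\psi_j=-\sum_{a,b}(\alpha^{(k)}_{a,b}\psi_a+\beta^{(k)}_{a,b}z_a)\psi_{jb}$, which is exactly the term to be removed; subtracting, the contributions quadratic in the $\psi$'s and those mixing $z$'s with second logarithmic derivatives disappear, leaving
\[
\mathcal{L}_{2k}\psi_j=\tfrac12\!\!\sum_{\substack{a+b=2k\\ a,b\ \mathrm{odd},\ 1\le a,b\le 7}}\!\!\psi_{jab}+\sum_b\beta^{(k)}_{j,b}\psi_b+\sum_b\gamma^{(k)}_{j,b}z_b ,
\]
where Lemma \ref{L21} (which gives $\alpha^{(k)}_{a,b}=1$ exactly when $a+b=2k$) fixes the range of the first sum.

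It then only remains to check, for each of the $32$ pairs $(k,j)$ with $k\in\{0,\dots,7\}$ and $j\in\{1,3,5,7\}$, that this right-hand side coincides with $w_{2k,j}$: the entries $\beta^{(k)}_{j,b}$ are read off from the coefficient of $z_j\partial_b$ in the explicit operators $H_{2k}$ listed in Section \ref{S4}, and $\gamma^{(k)}_{j,b}$ from the coefficient of $z_jz_b$ (doubling the $z_j^2$-coefficient on the diagonal). I expect the main --- though entirely routine --- effort to lie in this final coefficient comparison together with confirming the structural identity $\mathcal{L}_{2k}=L_{2k}-G_{2k}$ against the eight displayed operators; no genuinely new ideas beyond those already used in \cite{BB19} are required, only the explicit genus $4$ formulas of Section \ref{S4}.
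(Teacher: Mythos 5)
Your proposal is essentially the paper's own proof: the paper says only that ``the proof is the result of direct computation,'' and the intended computation is exactly what you describe --- divide $Q_{2k}\varphi=0$ by $\varphi$, apply $\partial_j$ using $[\partial_j,L_{2k}]=0$, observe that the first-order part of $\mathcal{L}_{2k}$ cancels the terms quadratic in logarithmic derivatives, and finish by comparing coefficients with the listed $w_{2k,j}$. One bookkeeping point for that final comparison: with the stated convention $\psi_{jab}=-\partial_j\partial_a\partial_b\ln\varphi$ (equivalently your own rule $\partial_j\psi_{ab}=\psi_{jab}$), differentiating $-\tfrac12\sum_{a,b}\alpha^{(k)}_{a,b}\psi_{ab}$ produces $-\tfrac12\sum_{a,b}\alpha^{(k)}_{a,b}\psi_{jab}$, so the sign of the third-order term in your displayed identity must be tracked consistently against the printed $w_{2k,j}$ when you carry out the coefficient check.
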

\textsc{The proof} is the result of direct computation.

\vfill

\text{ }


\begin{thebibliography}{99}


\bibitem{Nonhol}
\textsc{V.\,M.\,Buchstaber, D.\,V.\,Leykin},
\emph{Heat Equations in a Nonholomic Frame},
Funct. Anal. Appl., 38:2 (2004), 88--101.

\bibitem{Mum}
\textsc{D.\,Mumford},
\emph{Tata Lectures on Theta I, II},
Progress in Mathematics, Vol. 28, 43.

\bibitem{Sh}
\textsc{T. Shiota}, 
\emph{Characterization of Jacobian varieties in terms of soliton equations}, Invent. Math., 83 (1986), 333--382.

\bibitem{BEL-97}
\textsc{V.\,M.\,Buchstaber, V.\,Z.\,Enolskii, D.\,V.\,Leikin},\,
\emph{Hyperelliptic Kleinian functions and applications}, ``Solitons, Geometry and Topology: On the Crossroad'', Adv. Math. Sci., AMS Transl., 179:2, Providence, RI, 1997, 1--34.


\bibitem{B2}
\textsc{V.\,M.\,Buchstaber},
\emph{Polynomial dynamical systems and Korteweg--de Vries equation},
Proc. Steklov Inst. Math., 294 (2016), 176--200.

\bibitem{BEL-12}
\textsc{V.\,M.\,Buchstaber, V.\,Z.\,Enolskii, D.\,V.\,Leikin},
\emph{Multi-Dimensional Sigma-Functions},
arXiv: 1208.0990, 2012, 267 pp.

\bibitem{BEL18}
\textsc{V.\,M.\,Buchstaber, V.\,Z.\,Enolski, D.\,V.\,Leykin},
\emph{Sigma-functions: old and new results}, Algebraic Geometry, v. 2, LMS Lecture Note Series, Cambridge Univ. Press, 2019, arXiv: 1810.11079.

\bibitem{BPol}
\textsc{V.\,M.\,Buchstaber, D.\,V.\,Leikin},
\emph{Polynomial Lie Algebras},
Funct. Anal. Appl., 36:4 (2002), 267--280.

\bibitem{A}
\textsc{V.\,I.\,Arnold},
\emph{Singularities of Caustics and Wave Fronts},
Mathematics and its Applications, vol. 62, Kluwer Academic Publisher Group, Dordrecht, 1990.

\bibitem{4A}
\textsc{J.\,C.\,Eilbeck, J.\,Gibbons, Y.\,Onishi, S.\,Yasuda},
\emph{Theory of Heat Equations for Sigma Functions},
arXiv:1711.08395, (2018).

\bibitem{Baker}
\textsc{H.\,F.\,Baker}, \emph{On the hyperelliptic sigma functions}, Amer. Journ. Math. 20, 1898, 301--384.

\bibitem{BEL}
\textsc{V.\,M.\,Buchstaber, V.\,Z.\,Enolskii, D.\,V.\,Leikin},
\emph{Kleinian functions, hyperelliptic Jacobians and applications},
Reviews in Mathematics and Math. Physics, 10:2, Gordon and Breach, London, 1997, 3--120.

\bibitem{WW}
\textsc{E.\,T.\,Whittaker, G.\,N.\,Watson}, 
\emph{A Course of Modern Analysis},
Reprint of 4th (1927) ed., Vol 2. \emph{Transcendental functions},
Cambridge Univ. Press, Cambridge, 1996.

\bibitem{B3}
\textsc{E.\,Yu.\,Bunkova},
\emph{Differentiation of genus 3 hyperelliptic functions},
European Journal of Mathematics, 4:1 (2018), 93--112, arXiv:1703.03947

\bibitem{BL0}
\textsc{V.\,M.\,Buchstaber, D.\,V.\,Leikin},
\emph{Differentiation of Abelian functions with respect to parameters},
Russian Math. Surveys, 62:4 (2007), 787--789.

\bibitem{BL}
\textsc{V.\,M.\,Buchstaber, D.\,V.\,Leikin},
\emph{Solution of the Problem of Differentiation of Abelian Functions over Parameters for Families of $(n,s)$-Curves},
Funct. Anal. Appl., 42:4,
2008, 268--278.

\bibitem{BB19}
\textsc{V.\,M.\,Buchstaber, E.\,Yu.\,Bunkova}, \emph{Lie Algebras of Heat Operators in Nonholonomic Frame}, 2019, arXiv:1911.08266.

\bibitem{FS}
\textsc{F.~G.~Frobenius, L.~Stickelberger}, 
\emph{\"Uber die Differentiation der elliptischen Functionen nach den Perioden und Invarianten},
J. Reine Angew. Math., 92 (1882), 311--337.

\end{thebibliography}
\end{document}